\def\ne{\boldsymbol{WSN}}
\def\ane{\boldsymbol{ANE}}
\def\ned{\boldsymbol{WSN}_{dist}}
\def\fp{\boldsymbol{AFP}}
\def\fpd{\boldsymbol{AFP}_{dist}}
\def\ep{\boldsymbol{ESP}}
\def\hp{\boldsymbol{HTP}}
\def\p{\boldsymbol{\Pi}}
\def\E{\mathbb{E}}
\theoremstyle{definition}
\newtheorem{theorem}{Theorem}
\newtheorem{corollary}{Corollary}
\newtheorem{definition}{Definition}
\newtheorem{lemma}{Lemma}
\newtheorem{proposition}{Proposition}
\newtheorem{remark}{Remark}
\newtheorem{example}{Example}
\begin{document}

\title{Query Complexity of Approximate Nash Equilibria}
\author{Yakov Babichenko\footnote{Center for the Mathematics of Information, Department of Computing and Mathematical Sciences, California Institute of Technology. E-mail: babich@caltech.edu.}\footnote{The author wishes to thank Noam Nisan, Sergiu Hart, Paul Goldberg, Yishay Mansour, and Stephen Vavasis for useful discussions and comments. The author  gratefully acknowledges support from a Walter S. Baer and Jeri Weiss fellowship.}}

\maketitle

\begin{abstract}
We study the query complexity of approximate notions of Nash equilibrium in games with a large number of players $n$. Our main result states that for $n$-player binary-action games and for constant $\varepsilon$, the query complexity of an $\varepsilon$-well-supported Nash equilibrium is exponential in $n$. One of the consequences of this result is an exponential lower bound on the rate of convergence of adaptive dynamics to approxiamte Nash equilibrium.
\end{abstract}

\section{Introduction}

The problem of computing Nash equilibrium is known to be hard (see \cite{DGP}), even for two-player games (see \cite{CD}). However, there are still many open questions regarding the complexity of an \emph{approximate} Nash equilibrium. In this paper we will focus on one of them. Given a normal form $n$-player game with a constant number of actions $m$ for each player, how hard is it to compute an approximate Nash equilibrium? Note that in the above problem the size of the input is exponential. A reasonable model to consider in such a case is the \emph{query model}. Instead of a huge input that contains the whole game, we assume the existence of a \emph{black box}. For every query of the algorithm about payoffs in the game, the black box returns an answer. The queries could be about either action profiles (see Section \ref{sec:main}) or distributions over action profiles (see Section \ref{sec:dist}). We measure the complexity of an algorithm by the number of queries that it asks for the worst case input.

Before stating our main result on the query complexity of approximate \emph{Nash} equilibrium, we introduce the state of the art for the related question on the query complexity of approximate \emph{correlated} equilibrium. There exists a \emph{randomized} algorithm for computing an approximate correlated equilibrium using only $poly(n)$ payoff queries. Such a surprising\footnote{The result is surprising because after $poly(n)$ queries the algorithm knows only a tiny fraction of payoffs in the game ($\frac{poly(n)}{m^n}$). Nevertheless, the algorithm knows (with high probability) that a certain distribution forms an approximate correlated equilibrium.} result is achieved by \emph{regret minimizing algorithms} (for instance the \emph{regret matching} algorithm; see \cite{HN},\cite{HMc1},\cite{HMc4}). On the other hand, if we restrict the algorithm to be \emph{deterministic}, then the computation of an approximate correlated equilibrium requires exponential (in $n$) number of queries. This result is proved in a recent paper by Hart and Nisan \cite{HN}. Obviously, this result induces an impossibility result also for the computation of approximate Nash equilibrium. Namely, for deterministic algorithms the computation of an approximate Nash equilibrium requires exponential (in $n$) number of payoff queries. But it leaves open the question on the query complexity of approximate Nash equilibrium for \emph{randomized} algorithms. This question was posed in \cite{HN} as an open problem. In this paper we answer this question: the query complexity of an approximate Nash equilibrium is exponential, even for randomized algorithms (see Theorems \ref{theo:main}, \ref{theo:bin}, \ref{theo:NEdist}, and \ref{theo:ane}). 

We discuss two notions of additive approximation of equilibrium: approximate Nash equilibrium and approximate well-supported Nash equilibrium (see Section \ref{sec:games} for definition and a short discussion on these two notions). Our main result (Theorem \ref{theo:bin}) states that even for a constant approximation size, the query complexity of an approximate well supported Nash equilibrium in $n$-player binary-action games is $2^{\Omega(n)}$. The proof of the result is based on a novel reduction from the problem of finding a fixed point of a function to the problem of finding an approximate Nash equilibrium. The connection between the hardness of computing Nash equilibria and the hardness of computing fixed points was previously established (see \cite{DGP} and \cite{CD}). However, the established reductions fail to work for an approximate Nash equilibrium for \emph{constant} approximations. Beyond the simplicity of our reduction (see Section \ref{sec:ne->fp-const}), to the best of my knowledge this is the first reduction that works for an approximate notion of Nash equilibrium (the approximate well-supported Nash equilibrium) with \emph{constant} approximation.

The exponential lower bound on the query complexity of \emph{approximate} Nash equilibria has several applications and consequences.

%This result has several contributions.

%If we restrict the algorithm to be \emph{deterministic}, then it is known that the algorithm have to ask at least $2^{\Omega(n)}$ queries until it finds an approximate correlated equilibrium. This follows from a recent paper by Hart and Nisan \cite{HN}, where they show that even for the simpler problem of finding an approximate correlated equilibria the query complexity is exponential.

%For every \emph{deterministic} algorithm, the query complexity of approximate Nash equilibria is exponential in $n$.  This raises the question of the query complexity of an approximate Nash equilibrium for \emph{probabilistic} algorithms (for probabilistic algorithms the query complexity of an approximate correlated equilibrium is polynomial in $n$; see \cite{HN}). This question was posed in \cite{HN} as an open problem. In this paper we answer this question: the query complexity of an approximate Nash equilibrium is exponential, even for probabilistic algorithms (see Theorems \ref{theo:main}, \ref{theo:bin}, \ref{theo:NEdist}, and \ref{theo:ane}). This result has several contributions.

1. For computational complexity, this result provides evidence that it is very unlikely that there is a polynomial (in $n$) algorithm for computing an approximate Nash equilibrium in $n$-player binary-action games. This is because if such an algorithm exists, then it must depend on more complex data of the game than just expected payoffs under distributions. Note that for the parallel question of computing an approximate or even an exact correlated equilibrium, such an algorithm exists (see \cite{JL}, and \cite{PR}).

2. The result provides insights into the rate of convergence of adaptive dynamics to an approximate Nash equilibrium (see Section \ref{sec:dyn}). The question of the convergence of adaptive dynamics to an exact Nash equilibrium (pure or mixed) was studied by Hart and Mansour \cite{HM}, where they provide exponential lower bounds via communication complexity results. However, the problem of convergence to an \emph{approximate} Nash equilibrium remained open. Our result yields an exponential lower bound on the rate of convergence of adaptive dynamics to an \emph{approximate} Nash equilibrium (see Corollary \ref{cor:dyn-pure} and Theorem \ref{theo:dyn}) for a wide class of adaptive dynamics, which we call $k$-queries dynamics (see Definition \ref{def:qd}).

3. The third consequence follows from the proof of the main Theorem. After reducing the problem of finding approximate well-supported Nash equilibrium to the problem of finding approximate fixed point of a function, we analyze the query complexity of finding approximate fixed point of a function for \emph{randomized} algorithms. The complexity of computing approximate fixed-point is of an independent interest and was previously studied in the context of query complexity. Hirsch, Papadimitriou, and Vavasis \cite{HPV} studied it for \emph{deterministic} algorithms, whereas we analyze it for \emph{randomized} algorithms. These more general settings allow us to answer an open question that was posed in \cite{HPV} twenty five years ago: what is the query complexity of an approximate fixed point in the case where queries are \emph{distributions over the domain} (rather than just points in the domain). Theorem \ref{theo:fp}, answers this question: even if the queries are distributions, an $\exp(n)$ number of queries is needed to find an approximate fixed point.

In addition to the above mentioned literature, there are few other recent papers that study the query complexity of equilibria.

Fearnley, Gairing, Goldberg, and Savani \cite{FGGS} study the query complexity of an approximate Nash equilibrium. In particular, they derive a lower bound on the complexity of an approximate Nash equilibrium in two-player games. In addition, they provide several classes of $n$-player games where a polynomial number of queries is enough to find an approximate Nash equilibrium. The negative result of this paper shows that unlike the above mentioned games, in general games an exponential number of queries is needed.

Goldberg and Roth \cite{GR} study the query complexity of several notion of approximate equilibria. In particular, they prove that for games with succinct representation size $p$, the query complexity of approximate well-supported Nash equilibrium is polynomial in $n$ (number of players), $m$ (number of actions for each player) and $p$ (the representation size). This result demonstrates that the exponential lower bound presented in the present paper cannot hold for succinctly representable games. We point out that the query algorithm of Goldberg and Roth \cite{GR} for computing an approximate well-supported Nash equilibrium is \emph{not} computationally efficient. Moreover, in a recent paper, Rubinstain \cite{R1} proves that computing an approximate well-supported Nash equilibrium is PPAD-hard for succinctly representable games with $n$ players and constant number of actions. His proof is based on our novel reduction (Section \ref{sec:ne->fp-const}) from the fixed point problem to the approximate Nash equilibrium problem, which holds for constant $\varepsilon$. In an even more recent paper, Rubinstain \cite{R2} proves that this PPAD-hardness result actually holds for a very simple class of succinctly representable games: graphical polimatrix games.

\section{Preliminaries}

\subsection{The query model}

In the query model every problem $\p$ is specified by possible inputs, desired outputs, and \emph{queries}. The queries are specified by the type of questions that can be asked, and by the answers that are provided. A \emph{query algorithm}, which in this paper will be called simply \emph{algorithm}, is a procedure that asks queries in an adaptive manner and essentially for every input generates an output. Note that we put no computational constraints on the way the algorithm generates the next query or the output, given the previous answers.

%The \emph{query complexity} of an algorithm is the number of queries that are asked by the algorithm for the worst-case input. 

For randomized algorithms, we allow errors in the output. Namely, we require that for all inputs the answer will be correct only with probability $p<1$. 

%We denote by $Q_p(\p)$ the number of queries that are required  

Given an input, the number of queries of a randomized algorithm is a random variable. There are two reasonable definitions for the \emph{query complexity of a probabilistic algorithm}. One definition is via expectation, namely, the expected number of queries for outputting the correct answer with probability $p$, which is denoted by $QC_{\E,p}(\p)$. Another definition restricts the number of queries to be at most $T$, and we ask what is the minimal number $T$ such that there exists an algorithm that outputs the correct answer with probability $p$. We denote this number of queries by $QC_p(\p)$. There is a close relation between these two definitions. One such relation is given in the following remark. 

\begin{remark}\label{rem:qq}
Note that $QC_p(\p) \leq 2 QC_{\E,2p}(\p)$, because for an algorithm with expected query complexity $q$ that outputs the correct answer with probability $2p$, we can stop the algorithm after $2q$ steps. Then, by Markov inequality, the output will be correct with probability of at least $\frac{1}{2}(2p)$.
\end{remark}

The results in the present paper are lower bounds. We will formulate the results using $QC_p$. By the above remark we can easily translate all these lower bounds on $QC_p$ into lower bounds on $QC_{\E,2p}$. 

\subsection{Normal form games}\label{sec:games}
We consider normal form $n$-player games where every player has $m$ actions, and the payoffs are in $[0,1]$. We use the standard notations. The \emph{set of players} is $[n]$. The \emph{set of actions of player} $i$ is $A_i$. The \emph{set of action profiles} is $A=\times_i A_i$. The \emph{payoff function} of player $i$ is $u_i:A \rightarrow [0,1]$. The set of \emph{mixed strategies} of player $i$ is denoted by $\Delta(A_i)$. For a mixed strategy $x_i \in \Delta(A_i)$ we denote by $supp(x_i)\subset A_i$ the set of strategies that are played with positive probability. The set of \emph{distributions over action profiles} is denoted by $\Delta(A)$. The payoff function $u_i$ can be multilinearly extended into $u_i:\Delta(A) \rightarrow [0,1]$. The \emph{payoffs profile} is denoted by $u=(u_i)_{i=1}^n$, and we will identify the game with $u$.

Given a profile of mixed actions $x=(x_i)_{i=1}^n$ where $x_i \in \Delta(A_i)$, we denote by $br_i(x)=max_{a_i \in A_i} u_i(a_i, x_{-i})$ the best-reply value of player $i$, i.e., the maximal payoff that player $i$ can get against opponents' strategy $x_{-i}$.
 
There are two different notions for an additive approximation for Nash equilibrium.

The first one, an $\varepsilon$\emph{-Nash equilibrium} ($\varepsilon$-NE for short), requires that every player receives a payoff of at least $br_i(x)-\varepsilon$, i.e., $u_i(x)\geq br_i(x)-\varepsilon$ for every player $i$. Namely, \emph{the mixed strategy of a player} leads to a high enough payoff.

The second one, which is not less intuitive, is an $\varepsilon$-\emph{well-supported Nash equilibrium} ($\varepsilon$-WSNE for short) which requires that every player assign positive probability only to actions which lead to a payoff of at least $br_i(x)-\varepsilon$, i.e., if $a_i \in supp(x_i)$ then $u_i(a_i,x_{-i})\geq br_i(x)-\varepsilon$. Namely, \emph{each action that a player plays}, leads to a high enough payoff.

Note that every $\varepsilon$-WSNE is an $\varepsilon$-NE, but not vice versa. Nevertheless, we can construct an approximate WSNE from an approximate NE by relaxing the approximation from $\varepsilon$ to $\Theta(\sqrt{\varepsilon}n)$. Such a construction appears in Daskalakis, Goldberg, and Papadimitriou \cite{DGP}, Lemma 4.28, and a variation of this construction appears in the proof of Theorem \ref{theo:ane} in the Appendix A. 

Our main focus in this paper will be on approximate well supported Nash equilibria. But it will induce results also about approximate Nash equilibria (see Theorem \ref{theo:ane}).

\section{The Results}\label{sec:res}

\subsection{The main results}\label{sec:main}

Consider the problem of an approximate well-supported Nash equilibrium.

$\ne(n,m,\varepsilon)$:

INPUT- $n$-player game $u$ where every player has $m$ actions and the payoffs are in $[0,1]$.

OUTPUT- An $\varepsilon$-well-supported Nash equilibrium.

QUERIES- Each query is a pure action profiles $a$ and the answer is the payoffs profile $u(a)$.

We show in Theorem \ref{theo:bin} that even for the case of $m=2$ and constant $\varepsilon$ the problem $\ne$ requires $2^{\Omega(n)}$ queries. Before that, we introduce in Theorem \ref{theo:main} slightly weaker result: for constant $m$ and constant $\varepsilon$ the problem $\ne$ requires $2^{\Omega(n)}$ queries. The reasons for including the weaker result are the following two. First, the proof of the stronger result (for the case of $m=2$) is based on the proof of the weaker result (for the case of constant $m$) which is cleaner and simpler for understanding. Second, the lower bound for the case of constant $m$ is $2^{n/6}$ (if we ignore polynomial factors) whereas the lower bound for the case of $m=2$ is $2^{n/22170}$. Namely, the lower bound for the case of constant $m$ is significantly better (although both of them are exponential).
    
\subsubsection{Constant number of actions for each player}

%Our main theorem states that even in the case where the number of actions $m$ is constant, and the approximation size $\varepsilon$ is constant, the number of queries that are required to compute an approximate WSNE is exponential in $n$ (the number of players).

\begin{theorem}\label{theo:main}
Fix $m=3609$ and $\varepsilon=\frac{1}{2}10^{-7}$. For every probabilistic algorithm that uses $16 \cdot 2^{n/6}/n^4$ pure-action queries to compute an $\varepsilon$-well-supported Nash equilibrium in $n$-player games with $m$ actions for all players, there exists a game where it returns a correct answer with probability of at most $3\cdot 2^{-n/6}$. I.e., 
\begin{equation*}
QC_p(\ne(n,m,\varepsilon))\geq 16 \frac{2^{\frac{n}{6}}}{n^4}=2^{\Omega(n)}
\end{equation*}
for $p=3\cdot 2^{-n/6}$.
\end{theorem}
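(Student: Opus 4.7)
The plan is to prove the theorem via Yao's minimax principle: construct a distribution over games and show that any deterministic algorithm using fewer than $2^{n/3}/n^4$ pure-action queries succeeds with probability at most $3\cdot 2^{-n/3}$ under this distribution. The hard distribution arises from a chain of reductions starting from a combinatorial ``hidden path'' problem admitting a direct information-theoretic lower bound. Concretely, I would consider a random long self-avoiding path of length roughly $2^{n/3}$ in some structured graph, with a known starting vertex but unknown endpoint; the oracle reveals at each queried vertex only whether that vertex lies on the path and, if so, its predecessor and successor. Since each query exposes at most a single vertex worth of information and the endpoint's location is essentially uniform over an exponentially large set conditional on what has been revealed, $Q$ queries succeed with probability decaying exponentially in $n$, matching the stated bound.

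Next I would lift this combinatorial problem to an approximate fixed-point problem on $[0,1]^d$ for a suitable dimension $d$, in the spirit of Hirsch, Papadimitriou, and Vavasis. From the hidden path one defines an $O(1)$-Lipschitz vector field $f:[0,1]^d \to [0,1]^d$ whose displacement $\|f(x)-x\|$ is bounded below by a constant except in small neighborhoods of the path endpoints. The crucial property is locality: $f(x)$ is determined by the structure of the path in a small neighborhood of $x$, so each evaluation of $f$ costs $O(1)$ queries to the path oracle. Approximate fixed points of $f$ then correspond to endpoints of the hidden path.

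The final reduction embeds approximate fixed point into WSNE. The natural construction partitions the $2n$ players into two groups of size $n$. The first group of ``coordinate'' players each mix over a constant-size discretization of $[0,1]$, and together their mixed strategies encode a candidate point $x$. The second group of ``imitation'' players have payoffs ensuring that in any approximate WSNE each imitation player $i$ essentially plays the $i$-th coordinate of $f(x)$, while the coordinate players are in turn incentivized to match the imitation players. The two consistency conditions in a $10^{-8}$-WSNE force $x \approx f(x)$ coordinate-wise, locating a path endpoint. Each payoff query to the game costs one query to $f$ and hence $O(1)$ queries to the path oracle, so the query counts propagate polynomially.

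The main obstacle is maintaining all three reductions sharp enough that a constant $\varepsilon = 10^{-8}$ approximation suffices in the game despite the underlying dimension growing with $n$. Standard NE-to-fixed-point reductions (as in DGP) suffer a $\sqrt{\varepsilon}\,n$ loss that would be fatal here. The WSNE notion, being a per-player best-reply condition rather than an aggregate payoff condition, is essential: each player's approximation guarantee can be enforced locally, avoiding any dimension dependence in the approximation parameter. Matching the stated constants $10^{-8}$ and $10^4$ will require careful balancing of the interpolation scale in the fixed-point construction and of the action discretization and payoff range in the game reduction.
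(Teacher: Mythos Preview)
Your proposal is correct and follows essentially the same three-step reduction as the paper: hard hidden-path problem $\to$ approximate fixed point via the Hirsch--Papadimitriou--Vavasis locality construction $\to$ $\varepsilon$-WSNE via a two-group imitation game, combined with Yao's principle. The paper's specific hard distribution is a random walk of length $2^{n/3}$ on the $n$-dimensional hypercube with its cycles excised (using the $O(n\log n)$ mixing time to control both cycle lengths and hitting probabilities), and the game is exactly your coordinate/imitation construction with quadratic payoffs $u_i=-|a_i-b_i|^2$, $v_i=-|b_i-f_i(a)|^2$; your identification of the per-player nature of WSNE as the reason the approximation parameter stays constant is precisely the point.
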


This theorem is in contrast to the correlated equilibrium case, where the regret-minimizing algorithms (see, e.g., \cite{LW} and \cite{HMc1}) require only a polynomial number of queries to find an approximate correlated equilibrium.

%By Remark \ref{rem:qq} this result immediately induces a bound on the expected number of queries that are required to compute an approximate WSNE.
%
%\begin{corollary}
%%For every probabilistic algorithm that computes a $10^{-8}$-well supported Nash equilibrium in $n$ player games with $7217$ actions for all players, there exists a game for which the expected number of pure-action queries will be at least $0.5\cdot 2^{n/3}/n^4$. I.e.,
%$QC_{\E}(\ne(n,m,\varepsilon))\geq 8 \cdot 2^{n/6}/n^4 = 2^{\Omega(n)}$ for $m=3609$, and $\varepsilon=\frac{1}{2}10^{-7}$.
%\end{corollary}

The complete proof of Theorem \ref{theo:main} appears in Section \ref{sec:proof}. We present here a brief outline of the proof.

\textbf{Outline of the proof of Theorem \ref{theo:main}.} The proof is done in three steps. 

In the first step (Section \ref{sec:ne->fp-const}), we reduce the problem of finding approximated WSNE in a $2n$-player game to the problem of finding an approximate fixed point of a Lipschitz continuous $n$-dimensional mapping. For every function $f$ we define a game $u=u(f)$ (see equations (\ref{eq:payoff-def1}) and (\ref{eq:payoff-def2})), such that every approximate WSNE of $u(f)$ corresponds to an approximate fixed point of $f$. This reduction is based on a proof of Brower's fixed-point theorem using Nash's theorem. The proof appears in a blog by Eran Shmaya \cite{S}.

In the second step (Section \ref{sec:fp->ep}), we introduce the reduction of Hirsch, Papadimitriou, and Vavasis \cite{HPV} from the problem of finding an approximate fixed point of an $n$-dimensional mapping to the problem of finding the end of a simple path (path with no cycles) on the $n$-dimensional hypercube. For every simple path $\alpha$, they construct in a mapping $f=f(\alpha)$ such that every fixed point of $f$ corresponds to the end of the path $\alpha$ (see properties (P1)--(P3) in Remark \ref{rem:pro}).

Finally, in the third step (Section \ref{sec:ep->hp}), we prove that the end of a simple path is a hard problem, even in probabilistic settings. Hart and Nisan \cite{HN} show that the end of a path (not necessarily a simple one) is a hard problem. Using similar arguments to those in \cite{HN}, we strengthen this hardness result: it is hard to find the end of a path, even if it is known that the path is simple.

%\subsection{Binary-action games}
%
%The negative result of Theorem \ref{theo:main}, shows that we need exponential number of queries to find an approximate WSNE for games with $10^4$ actions for each player. The following theorem shows that the exponential lower bound holds even in the case where every player has only 2 actions.
%
%\begin{theorem}
%$QC_p(\ne(2\cdot 10^4 n,2,10^{-16}))\geq 2^{n/3}/n^4=2^{\Omega(n)}$ for $p=3\cdot 2^{-n/3}$.
%\end{theorem}
%
%The proof is based on a general reduction from $n$-player $m$-action game $G^n_m$ to $nm$-player 2-action game $G^{nm}_2$, such that every approximate WSNE of the game $G^{nm}_2$ induces an approximate WSNE of the game $G^n_m$.

\subsubsection{Binary-action games}

Theorem \ref{theo:main} proves an exponential lower bound on the number of queries that are required for finding an approximate WSNE for games with constant ($m=3609$), but huge, number of actions for each player. Our second main theorem states that an exponential lower bound holds even for the case where each player has only two actions ($m=2$).

\begin{theorem}\label{theo:bin}
There exist constants\footnote{We prove the result for the constants $c_1=c_2=\frac{1}{22170}$ and $\varepsilon=\frac{1}{14\cdot 10^6}$.} $c_1,c_2,\varepsilon>0$ such that for $p=2^{-c_1n}=2^{-\Omega(n)}$ holds
\begin{align*}
QC_p(\ne(n,2,\varepsilon))\geq 2^{c_2n}=2^{\Omega(n)}.
\end{align*}
\end{theorem}

The proof is relegated to Section \ref{sec:proof}. The idea is to modify  the reduction from the approximate fixed point problem to approximate WSNE that appears in the proof of Theorem \ref{theo:main}. We introduce a version of this reduction that holds for \emph{binary-action} games. Roughly speaking, the idea is to replace a player with $m$ actions by $m$ different agents where each agent has only two actions. The specific structure of the payoffs in the reduction of Theorem \ref{theo:main} allows us to do so, without changing the property that every approximate WSNE of the constructed game corresponds to an approximate fixed point of the function.

\begin{remark}
In the remainder of the paper, we will state all the consequences from the main theorems for binary action games rather than for games with constant number of actions, simply because this result is theoretically stronger.

All the consequences can be stated also for the case of constant number of actions (more precisely, for games with $3609$ actions for each player), and then the constant at the exponent of the lower bound is significantly better ($\frac{1}{6}$ instead of $\frac{1}{22170}$).
\end{remark}

\begin{remark}
For ease of presentation, in the remainder of the paper, we will not explicitly mention the exact constants in the theorem statements. Our results are essentially asymptotic in nature. The exact constants that follow from our proofs are poor (for example, we prove a lower bound of $2^{cn}$ for a very small value of $c$). Improving the underlying constants remains an open question. 
\end{remark}

\subsection{Distribution queries}\label{sec:dist}

In Theorem \ref{theo:bin} we considered the model where each query is a pure action profile $a\in A$ in the game. We would like to generalize the exponential lower bound of Theorem \ref{theo:bin} to the case where each query can be a \emph{distribution} over action profiles $x\in \Delta(A)$. 

The most natural model that comes to mind is the one where the answer to the query $x$ is the exact value $u(x)$. This model is not so interesting. In this model \emph{one} query is enough to receive the complete information about all the payoffs in the game. We illustrate this fact by an example.

\begin{example}\label{ex}

Assume that all payoffs are $0$ or $1$. We numerate all the action profiles in the game by $A=\{a(0),a(1),...,a(N-1)\}$ where $N=m^n$, and we query the distribution $x$ with 
\begin{equation*}
x(a(j))=\frac{2^j}{2^N-1}.
\end{equation*}
From the answer
\begin{equation*}
u(x)=\frac{1}{2^N-1}\sum_{j=0}^{N-1} u(a(j))2^j
\end{equation*}
the algorithm can derive all the values $u(a(j))$.

\end{example}

This example can be easily generalized from the $0,1$-payoffs case to the case where all payoffs are from the set $\{k/M:0\leq k \leq M, k \in \mathbb{N} \}$ for every $M$.

The exact-answer model mentioned above is not so interesting because the answer $u(x)$ may contain a huge amount of information, as illustrated in Example \ref{ex}.

A reasonable way to overcome this issue is to assume that the answers are given \emph{with precision $\delta$.} Namely, for every query $x\in \Delta(A)$ the answer is \emph{some} vector $w\in \mathbb{R}^n$ where $||w-u(x)||_\infty <\delta$. In this model we consider the problem of approximate WSNE. 

$\ned(n,m,\varepsilon,\delta)$:

INPUT- $n$-player game $u$ where every player has $m$ actions and the payoffs are in $[0,1]$.

OUTPUT- An $\varepsilon$-well-supported Nash equilibrium.

QUERIES- Each query is a distribution over actions $x\in \Delta(A)$ and the answer is some payoff vector $w$ where $||w-u(x)||_\infty \leq \delta$.

We will say that an algorithm \emph{solves the problem} $\ned$ in $T$ queries $(x_t)_{t=1}^T$ if it outputs the correct answer for every sequence of answers $(w_t)_{t=1}^T$ where $||w_t-u(x_t)||_\infty < \delta$ for all $t\in [T]$.

For every distribution query $x$, the answer $u(x)$ can be well approximated by a long enough sequence of pure action queries, simply by sampling from the distribution $x$ in an i.i.d. manner (see \cite{LMM}). Hence the impossibility result of theorems \ref{theo:main} and \ref{theo:bin} yields an impossibility result also for the distribution model.

\begin{theorem}\label{theo:NEdist}
$QC_p(\ned(n,2,\varepsilon,\delta))= \delta^2 2^{\Omega(n)}$ for constant $\varepsilon$ and for $p=2^{-\Omega(n)}$.

%Let $\mathcal{AL}$ be a probabilistic algorithm that computes $10^{-8}$-well supported Nash equilibrium in $n$ player games with $7217$ actions for each player. If $\mathcal{AL}$ uses $\delta^2 2^{n/3} / n^5$ distribution queries where the answers are with precision $\delta$, then there exists a game where it returns a correct answer with probability of at most $4\cdot 2^{n/3}$. I.e.,

%There exist constants\footnote{We prove the result for the constants $c_1=c_2=\frac{1}{22170}$ and $\varepsilon=\frac{1}{14\cdot 10^6}$.} $\overline{c},c,\varepsilon>0$ such that for $p=2^{-\overline{c} n}=2^{-Omega(n)}$ holds

\end{theorem}

Theorem \ref{theo:NEdist} yields that even for answers that are given with exponentially small precision (i.e., $\delta=2^{-\Omega(n)}$), the exponential lower bound still holds.

Theorem \ref{theo:NEdist} emphasizes even more the difference from the correlated equilibrium case. Following Jiang and Layton-Brown \cite{JL}, an \emph{exact} correlated equilibrium can be computed using a polynomial number of distribution queries (see also \cite{BB}). By Theorem \ref{theo:NEdist}, then, not only an exact Nash equilibrium, but even an approximate Nash equilibrium cannot be computed.

\begin{proof}[Proof of Theorem \ref{theo:NEdist}]
Every $\ned$ algorithm that uses $\delta^2 2^{cn}/n $ distribution queries with success probability $p+ \delta^2 2^{(c-2)n}$ induces a $\ne$ algorithm with $2^{cn}$ queries with success probability $p$. We replace every distribution query $x$ by $n/\delta^2$ pure action queries that are sampled i.i.d. from $x$. By Hoeffding's inequality (see \cite{H}), the probability that the sample approximates $u_i(x)$ with a precision of $\delta$ is at least $1-2e^{-2n}>1-2^{-2n}$. Therefore the probability that all the $\delta^2 2^{cn}/n$ queries will be well approximated for all players is at least $1- \delta^2 2^{cn} 2^{-2n}=1-\delta^2 2^{(c-2)n}$.

By Theorem \ref{theo:bin} for constant $c<2$ and $p=2^{-\Omega(n)}$ there is no $\ne$ algorithm with $2^{cn}$ queries with success probability $p$. This implies that for every algorithm that uses $\delta^2 2^{cn}/n =\delta^2 2^{\Omega(n)}$ distribution queries the probability of success is at most $p+ \delta^2 2^{(c-2)n}=2^{-\Omega(n)}$.
\end{proof}

\subsection{Approximate Nash equilibrium}\label{sec:fp}

The approximate (not necessarily well-supported) Nash equilibrium problem is denoted by $\ane$. $\ane$ has the same input and the same queries as the $\ne$ problem. The desired output is an $\varepsilon$-Nash equilibrium. For the approximate Nash equilibrium case, Theorem \ref{theo:main} induces the following result.

\begin{theorem}\label{theo:ane}
$QC_p(\ane(n,2,\frac{1}{n}))\geq 2^{\Omega(n)}$ for $p=2^{-\Omega(n)}$.
\end{theorem}

This theorem excludes the possibility of a sub-exponential full approximation scheme for the Nash equilibrium in the query model.

The idea of the proof of Theorem \ref{theo:ane} is simple, and is presented below. The formal proof is slightly technical and it is relegated to the Appendix.

\textbf{Outline of the proof of Theorem \ref{theo:ane}.} The idea is that we can construct $O(\varepsilon)$-WSNE $(y_i)_{i=1}^n$ from an $(\varepsilon^2/n)$-Nash equilibrium $(x_i)_{i=1}^n$. Once we prove this, the lower bound $QC_p(\ane(n,2,\frac{\varepsilon^2}{n}))\geq 2^{\Omega(n)}$ follows immediately from Theorem \ref{theo:bin}.

Daskalakis, Goldberg, and Papadimitriou \cite{DGP}, Lemma 4.28, present such a construction. For every player $i$ let $a_i^*$ be one of the best replies to $x_{-i}$, and let $br_i=u_i(a_i^*,x_{-i})$ be the best-reply value. Fix the threshold $t_i=br_i-\varepsilon$. In the mixed strategy $y_i$, every probability mass on an action $a_i$ such that $u_i(a_i,x_{-i})<t_i$ is replaced by a probability mass on the action $a_i^*$.

We cannot use this construction directly, because in the pure-action queries model the values $u_i(a_i,x_{-i})$ are not known to the algorithm. Nevertheless, we can use \emph{approximations} to those values through sampling (similar to Theorem \ref{theo:NEdist}). In the proof of the theorem we show that the above construction can be done even if we use approximate values for $u_i(a_i,x_{-i})$ rather than exact ones.

%\begin{lemma}
%
%\end{lemma} 
%
%\begin{proof}
%Lemma 4.28 in \cite{DGP} introduces the following procedure. 
%
%Let $x=(x_i)_{i=1}^n$ be an $(\varepsilon^2/n^2)$-Nash equilibrium. Let $a_i^*$ be a $\delta$-best reply action against $x_{-i}$ ($u_i(a_i^*,x_{-i})\geq br_i - \delta$), and let $B_i={a_i\in A_i: u_i(a_i,x_{-i})<br_i-\varepsilon/n$ be the \emph{set of bed actions} that are significantly worse than $a_i^*$ against $x_{-i}$. Let $\overline{x_i}$ be the mixed action that is resulted by replacing all the probability mass on $B_i$ by a probability mass on $a_i^*$. In \cite{DGP} they shows that $(\overline{x_i})_{i=1}^n$ is an $\varepsilon$-well supported Nash equilibrium.
%
%In the pure-query model the values $u_i(a_i,x_{-i})$ are not known to the algorithm 
%
%By observing the proof of this result we can see that this procedure is stable to the following type of mistakes.   this argument is 
%
%Therefore every $\ane(n,m,\varepsilon^2/n^2)$ algorithm induces a $\ne(n,m,\varepsilon)$ algorithm as follows. First, the algorithm finds a $ (\varepsilon^2/n^2)$-Nash equilibrium. Then, it evaluates the values $u_i(a_i,x_{-i})$. Note that those are not pure action queries, but, similar to Corollary \ref{cor:NEdist}, the algorithm may use   we can use the same arguments as in
%
%$B_i\subset A_i$ be the set of pure actions that yields  
%\end{proof}

\subsection{The approximate fixed-point problem}

We will consider the $||\cdot ||_\infty$ norm on $\mathbb{R}^n$.  Thus, a function $f:[0,1]^n \rightarrow [0,1]^n$ is $\lambda$\emph{-Lipschitz} if $||f(x)-f(y)||_\infty \leq \lambda ||x-y||_\infty$. A point $x\in [0,1]^n$ is an $\varepsilon$\emph{-fixed point} of $f$ if $||f(x)-x||_\infty \leq \varepsilon$. The approximate fixed-point problem is defined as follows.

$\fp(n,\lambda,\varepsilon)$:

INPUT- $\lambda$-Lipschitz function $f:[0,1]^n \rightarrow [0,1]^n$.  

OUTPUT- $\varepsilon$-fixed point of $f$.

QUERIES- Each query is a point $x\in [0,1]^n$ and the answer is $f(x)$.

The query complexity of the approximate fixed-point problem  was studied by Hirsch, Papdimitriou, and Vavasis \cite{HPV} in \emph{deterministic} settings. As was mentioned above, in the proof of Theorem \ref{theo:main} we reduce the problem of an approximate WSNE to the problem of an approximate fixed point. Then we prove that the approximate fixed-point problem requires an exponential number of queries, even if we allow \emph{probabilistic} algorithms. This result generalizes the result in \cite{HPV} to probabilistic algorithms, and may be of independent interest.

\begin{corollary}
Fix $\lambda=79$ and $\varepsilon=1/88$, then for $p=2^{-\Omega(n)}$ holds\newline
$QC_p(\fp(n, \lambda, \varepsilon))= 2^{\Omega(n)}$. 
\end{corollary}

Moreover, in probabilistic settings (unlike deterministic settings) we can use the sampling method to derive lower bounds for the case where the queries are \emph{distributions}. This observation answers the open question that was presented in \cite{HPV}: what is the query complexity of finding an approximate fixed point of a function $f:[0,1]^n \rightarrow [0,1]^n$ when every query is a distribution over $[0,1]^n$?

Note that exactly as in the Nash equilibrium case (see the beginning of Section \ref{sec:dist}), the case where the answer to a distribution $\mu$ is the exact value $\E_{x\sim \mu}f(x)$ is not interesting. By similar arguments to those that appear in Example \ref{ex}, we can use only one query to get the values of $f$ on an arbitrary small grid, if we know that the values of the function on this grid are rational numbers with a denominator at most $M$. This is indeed the case, for instance, if the function is \emph{rational} (a quotient of two polynomials). Therefore, as in Section \ref{sec:dist}, we analyze the problem when the answers are given with a precision $\delta$.

$\fpd(n,\lambda,\varepsilon,\delta)$:

INPUT- $\lambda$-Lipschitz function $f:[0,1]^n \rightarrow [0,1]^n$.  

OUTPUT- An $\varepsilon$-fixed point of $f$.

QUERIES- Each query is a distribution $\mu$ over $[0,1]^n$ and the answer is some vector $w$ where $||w-\E_{x\sim \mu} f(x)||_\infty<\delta$.

\begin{theorem}\label{theo:fp}
Fix $\lambda=79$ and $\varepsilon=1/88$, then for $p=2^{-\Omega(n)}$ holds\newline
$QC_p(\fpd(n,\lambda,\varepsilon,\delta))\geq \delta^2 2^{\Omega(n)}$.
\end{theorem}

Note that even if the answers are given with an exponentially small precision ($\delta=2^{-\Omega(n)}$), the exponential lower bound still holds.

The proof is exactly the same as the proof of Theorem \ref{theo:NEdist}: we can implement every $\fpd$ algorithm by an $\fp$ algorithm using the sampling method. 

\subsection{Adaptive dynamics}\label{sec:dyn}

One of the central tools to derive lower bounds on the rate of convergence of adaptive dynamics is the \emph{communication complexity} tool (see \cite{KN}). Conitser and Sandholm \cite{CS} first introduced this idea in their study of two-player games. Later, Hart and Mansour \cite{HM} studied the communication complexity of Nash equilibria in $n$-player games. Hart and Mansour \cite{HM} showed that the communication complexity of exact pure and mixed Nash equilibria is exponential in $n$. As a consequence, they derived that there exists no \emph{uncoupled} dynamic (see \cite{HMc2} and \cite{HMc3} for definition and discussion on uncoupled dynamics) that converges to a pure or an exact mixed Nash equilibrium faster than $\exp(cn)$ steps (for constant $c$). The question regarding the communication complexity of and the rate of convergence to an \emph{approximate} Nash equilibrium, however, remained an open question.

Here we will not address the question of the communication complexity of an approximate Nash equilibrium. The query complexity model is weaker than the communication complexity model. Nevertheless, our result on the query complexity does induce interesting insights into the rate of convergence to an approximate Nash equilibrium of adaptive dynamics.

The communication complexity model induces results on the important class of uncoupled dynamics. The query complexity model induces results on a different class of dynamics, which we will call $k$\emph{-queries dynamics}. As we will see, this class of dynamics contains most of the known adaptive dynamics.

\subsubsection{Dynamics model and $k$-queries dynamics}

We introduce very brief description of the dynamic model.

In the dynamic settings we assume that the same one shot game $u$ is played repeatedly over time $t=1,2,...$. A \emph{history of play at time $t$} is the sequence $h(t)=(a(1),a(2),...,a(t-1))$ of past realized action profiles. For general dynamic, the mixed action of every player $i$ at time $t$ depends on the game $u$ and on the history of play at time $t$, and will be denoted by $x_i(t)=s_i(u,h(t))$. The realized pure action profile $a(t)$ is drawn according to the mixed action profile $(x_i(t))_{i\in [n]}$. The \emph{dynamic} is specified by the mappings $(s_i)_{i\in [n]}$, where $s_i$ maps every payoff function and history of play to player's $i$ next mixed action.

The idea in the definition of $k$-queries dynamics is to ask: How many additional payoff queries are needed to calculate the mixed strategies $x_i(t)=s_i(u,h_t)$ of all player $i$ at time $t$? Where by ``additional" we mean ``additional to the queries that was already asked until time $t-1$".

Let as illustrate this idea by an example.

\begin{example}\label{ex:rel}
Consider the class of \emph{regret based dynamic}, i.e., dynamics where at each time $t$, the mixed action of every player $i$ is a function of the regrets $\{R^i_{a_i\rightarrow a'_i}: a_i,a'_i\in A_i \}$ , where the regrets are calculated according to the aggregate joint action of the opponents until time $t$. See \cite{HMc4} for the definition of regrets and a discussion on regret-based dynamics. Note that all the regrets of player $i$ at time $t$ depend only on the payoffs $\{u_i(a_i,a(t')_{-i}): a_i\in A_i, t'<t\}$. Therefore, in order to calculate the regrets of player $i$ it is sufficient at each time $t$ to query the $m$ actions $\{(a_i,a(t-1)_{-i}):a_i\in A_i\}$ (note that the actions $\{(a_i,a(t')_{-i}):a_i\in A_i, t'<t-1\}$ was already queried in the previous steps). Hence, we will say that regret based dynamics are $nm$\emph{-queries dynamics}, because $nm$ additional payoff queries at each step are sufficient to calculate the mixed strategy of all players. 
\end{example}

The formal definition is as follows:

\begin{definition}\label{def:qd}
A dynamic will be called $k$-queries dynamic, if there exists a mapping that assigns to each history of play a set of $k$ (additional) pure actions payoff queries, such that the mixed strategy of all players at time $t$ can be calculated using the $tk$ queries until time $t$.
\end{definition}

\subsubsection{The generality of $k$-queries dynamics}

We argue that most studied adaptive dynamics are $mn$-queries dynamics.

Example \ref{ex:rel} illustrates the fact that \emph{regret based} dynamics are $mn$-queries dynamics. The arguments in the example can be applied not only to the regret matching dynamic, but also to many the other studied regret minimizing dynamics as eigenvector dynamics, smooth fictitious play, and joint strategy fictitious-play (an overview of regret minimizing dynamics appears in \cite{HMc4}).

\emph{Better reply dynamics} are also $mn$-queries dynamics, because the mixed action of every player $i$ at time $t$ depends on the set of payoffs $\{u_i(a_i,a(t-1)):a_i\in A_i\}$, which again requires $mn$ queries. The class of better reply dynamics includes important dynamics as best-reply dynamic and logit dynamic (see \cite{B}).

Another class of dynamics that was studied in the literature is \emph{evolutionary dynamics}, as for example replicator dynamics, and smith dynamic (an overview of regret minimizing dynamics appears in \cite{S}). Evolutionary dynamics in population games are a specific case of better-reply dynamics, and therefore they are also $mn$-queries dynamics.

\subsubsection{Lower bound on the rate of convergence}

Clearly every $k$-queries dynamic that converges to an approximate equilibrium (or any other solution concept of the game) in $T$ steps with probability $p$ induces a query algorithm in the pure-query model that finds an approximate equilibrium in at most $kT$ queries with probability $p$.

Therefore, from Theorem \ref{theo:bin} we get the following corollary regarding the rate of convergence of $k$-queries dynamics:

\begin{corollary}\label{cor:dyn-pure}
There is no $k$-queries dynamic that converges to an $\varepsilon$-well-supported Nash equilibrium in $2^{\Omega(n)}/k$ steps with probability of at least $2^{-\Omega(n)}$ in all $n$-player binary action games.
\end{corollary}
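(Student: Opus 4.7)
The plan is to proceed by contrapositive and turn any candidate fast-converging $k$-queries dynamic into a pure-action query algorithm for $\ne(2n,10^4,10^{-8})$ whose parameters are ruled out by Theorem~\ref{theo:main}. Suppose for contradiction that some $k$-queries dynamic $(s_i)_{i\in[n]}$ reaches a $10^{-8}$-WSNE within $T := 2^{n/3}/(kn^4)$ steps with probability at least $3\cdot 2^{-n/3}$ on every $2n$-player game with $10^4$ actions per player. I will construct an algorithm $\mathcal{A}$ for $\ne(2n,10^4,10^{-8})$ that, given the payoff oracle for the input game $u$, simulates this dynamic round by round.

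For the simulation step at each time $t = 1, 2, \dots, T$, Definition~\ref{def:qd} supplies a fixed mapping that assigns to the history $h(t)$ a set $Q(h(t))$ of $k$ pure action profiles whose payoffs, together with those queried in earlier rounds, determine $x_i(t) = s_i(u,h(t))$ for every player $i$. The algorithm $\mathcal{A}$ queries the oracle on each profile in $Q(h(t))$, assembles the mixed profile $x(t)$, and then draws the realized $a(t) \sim x(t)$ using its own internal randomness to extend the history. After $T$ rounds, $\mathcal{A}$ outputs the mixed profile that the dynamic designates as its approximate equilibrium.

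The bookkeeping is then immediate: $\mathcal{A}$ asks at most $kT = 2^{n/3}/n^4$ queries, and its success probability equals the convergence probability of the dynamic, since the $a(t)$ drawn internally by $\mathcal{A}$ follow exactly the same distribution as those arising in the repeated-play model. This yields a probabilistic algorithm for $\ne(2n,10^4,10^{-8})$ with query budget $2^{n/3}/n^4$ and success probability at least $3\cdot 2^{-n/3}$ on every such game, directly contradicting Theorem~\ref{theo:main}. The one mildly delicate point — and the only place where one might worry — is the output convention, since the dynamic promises convergence but need not automatically flag the successful round; this is harmless, because $\mathcal{A}$ may simply output $x(T)$ under the standard semantic that convergence at time $T$ means $x(T)$ is an $\varepsilon$-WSNE, and any additional verification queries can be absorbed into lower-order terms without affecting the asymptotic contradiction.
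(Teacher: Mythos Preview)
Your proposal is correct and follows exactly the paper's approach: the paper's entire argument is the one-sentence observation preceding the corollary that any $k$-queries dynamic converging in $T$ steps with probability $p$ yields a pure-query algorithm using $kT$ queries with success probability $p$, after which Theorem~\ref{theo:main} is invoked. You have simply spelled out the simulation in more detail than the paper does; the minor notational slip $(s_i)_{i\in[n]}$ should read $i\in[2n]$, but this does not affect the argument.
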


By the Minmax Theorem (which is also called in this context Yaho's minmax Theorem) the impossibility result can also be extended to the Bayesian settings, where the game is drawn according to a probability distribution.

\begin{corollary}\label{cor:dyn-bays}
There exists a distribution over $n$-players binary-actions games, such that for every $k$-queries dynamic the expected\footnote{The expectation of the number of steps is taken over the game and the probabilistic process induced by the dynamic (in case the dynamic is not deterministic).} number of steps until the dynamic converges to a $\varepsilon$-well-supported Nash equilibrium is at least $2^{\Omega(n)}/k$.
\end{corollary}

%The Corollary follow from the fact that in the proof of Theorem \ref{theo:bin} we actually construct a distribution over games for which every query algorithm has to ask $\exp(n)$ number of queries in expectation.

%Note that Corollary \ref{cor:dyn-bays} is stronger than Corollary \ref{cor:dyn-pure}. 

Corollary \ref{cor:dyn-bays} states that there exists a distribution over games which is universally bed instances for all dynamics, whereas Corollary \ref{cor:dyn-pure} states that for every dynamic there exists some bad instances.

There are important dynamics, as for example the fictitious-play dynamic (see \cite{R}), where in order to calculate the mixed action of the player using small number of queries we must use \emph{mixed} action queries, rather than pure. In fictitious-play for example the strategy of every player $i$ at time $t$ is determined by the payoffs $\{u_i(a_i,(g_j)_{j\neq i}):a_i\in A_i\}$, where $g_j$ is the aggregate behavior of player $j$ up to time $t$. For those dynamics we can similarly define $k$\emph{-mixed-queries} dynamics, or more general class of $k$\emph{-distribution-queries} dynamics. In order to derive a lower bound on the rate of convergence of those dynamics we should rely on the impossibility result of Theorem \ref{theo:NEdist}, which corresponds to the case where the queries could be mixed actions, or more general-- distributions over action profiles. A difficulty arises when we try to do so. The query model in Theorem \ref{theo:NEdist} assumes that the answers are given with a precision $\delta$, whereas in the dynamics model the answers are precise. An additional assumption of \emph{continuity of the dynamic} (see Definition \ref{def:con}), makes it possible to derive a query algorithm with \emph{approximate} answers using a dynamic that depends on the \emph{exact} answers. The formal discussion on distribution-queries dynamics, and the lower bound on the rate of convergence of continuous $k$-distribution-queries dynamics (see Theorem \ref{theo:dyn}) are relegated to Appendix B.

\section{Proof of Theorem \ref{theo:main}}\label{sec:proof}

\subsection{From Approximate Nash Equilibrium to Approximate Fixed Point}\label{sec:ne->fp}

\subsubsection{Games with constant number of actions}\label{sec:ne->fp-const}

We show a reduction from the $\fp(n,\lambda, \varepsilon)$ problem to the $\ne(2n,k+1,\frac{3}{4k^2})$ problem, where $k=\lceil \frac{ \lambda +3}{2\varepsilon}\rceil$ (see Sections \ref{sec:main} and \ref{sec:fp} for the definitions of the problems). The reduction is based on a construction that proves Brower's fixed-point theorem using Nash's theorem, and appears in a blog of Eran Shmaya \cite{S}.

Given a $\lambda$-Lipschitz function $f:[0,1]^n \rightarrow [0,1]^n$, we construct a game with two groups of $n$ players. The action set of all players is $\{0,\frac{1}{k},\frac{2}{k},...,1\}$ for $k=\lceil\frac{ \lambda +3}{2\varepsilon} \rceil$. We denote by $a=(a_i)_{i=1}^n / b=(b_i)_{i=1}^n$ the vector that is played by the first/second group of players. The payoff function of player $i$ in the first group is defined by 
\begin{equation}\label{eq:payoff-def1}
u_i(a,b)=-|a_i-b_i|^2.
\end{equation}
The payoff function of player $i$ in the second group is defined by
\begin{equation}\label{eq:payoff-def2}
v_i(a,b)=-|b_i-f_i(a)|^2,
\end{equation}
where $f_i$ denotes the $i$th coordinate of the function $f$.

Simply speaking, the first group is trying to match the vector of the second group, whereas the second group is trying to match the $f$ operation on the vector of the first group.

Let $(x,y)$ be a $\frac{3}{4k^2}$-WSNE of the game, where $x=(x_i)_{i=1}^n/y=(y_i)_{i=1}^n$ is the mixed-actions profile of the players in the first/second group.

The payoff of player $i$ in the first group that faces the mixed strategy $y$ of the second group can be written as
\begin{equation}\label{eq:pl1payoff}
u_i(a_i,y)=-(a_i-\E (y_i))^2 -Var (y_i).
\end{equation}

Let $\alpha_i\in \mathbb{N}$ be such that $\frac{\alpha_i}{k} \leq \E (y_i) \leq \frac{\alpha_i+1}{k}$. W.l.o.g. we assume that $\E (y_i) \leq \frac{\alpha_i+0.5}{k}$; i.e., $\E (y_i)$ is closer to $\frac{\alpha_i}{k}$ than to $\frac{\alpha_i+1}{k}$. By equality (\ref{eq:pl1payoff}) it is clear that $\frac{\alpha_i}{k}$ is a best reply of player $i$ and his payoff at a best reply is at least
\begin{equation}\label{ineq:pay1}
u_i(\frac{\alpha_i}{k}, y)\geq -\frac{1}{4k^2} - Var(y_i).
\end{equation}
For every action $\frac{\gamma}{k}$ where $\gamma\neq \alpha_i,\alpha_i+1$, player $i$'s payoff is at most 
\begin{equation}\label{ineq:pay2}
u_i(\frac{\gamma}{k}, y)\leq -\frac{1}{k^2} - Var(y_i).
\end{equation}

Therefore in a $\frac{3}{4k^2}$-WSNE player $i$ assigns positive probability only to the strategies $\frac{\alpha_i}{k}$ and $\frac{\alpha_i+1}{k}$.

Following similar arguments for the second group of players we write player's $i$ payoff as
\begin{equation*}
v_i(x,b_i)=-(b_i-\E (f_i(x)))^2 -Var (f_i(x)),
\end{equation*}
and we derive that player $i$ assigns positive probability only to the strategies $\frac{\beta_i}{k}$ and $\frac{\beta_i+1}{k}$ where $\frac{\beta_i}{k} \leq \E (f_i(x)) \leq \frac{\beta_i+1}{k}$.

For every approximate WSNE we set $c_i=\frac{\alpha_i+0.5}{k}$, and $d_i=\frac{\beta_i+0.5}{k}$. We claim that the point $c=(c_i)_{i\in [n]}$ is an approximate fixed point of $f$. 

%Let us show that every vector that is played by the first group of players in an approximate Nash equilibrium with positive probability corresponds to an approximate fixed point of $f$. Let $c_i=\frac{\alpha_i+0.5\pm 0.5}{k}$ and $d_i=\frac{\beta_i+0.5\pm 0.5}{k}$ for some choice of the $\pm 0.5$. 

For every $i$ we have
\begin{equation*}
|c_i-d_i| \leq |c- \E (y_i)| + |\E (y_i) -d_i| \leq \frac{0.5}{k}+\frac{0.5}{k}
\end{equation*}
and
\begin{equation*}
|d_i-f_i(c)| \leq |d_i- \E (f_i(x))| + | \E (f_i(x)) - f_i(c)| \leq \frac{0.5}{k}+\lambda \frac{0.5}{k}.
\end{equation*}
Therefore $|c_i-f_i(c)| \leq \frac{\lambda+3}{2k} \leq \varepsilon$, which implies that $||c-f(c)||_\infty \leq \varepsilon$.

This construction yields the following result.

\begin{proposition}\label{pro:ne->fp}
For every $p>0, n\geq 1 , \lambda \geq 0$, and $\varepsilon \geq 0$, set $k=\lceil \frac{ \lambda +3}{2\varepsilon}\rceil$. Then we have
\begin{equation*}
QC_p(\ne(2n, k+1, \frac{3}{4k^2})) \geq QC_p(\fp(n,\lambda,\varepsilon)).
\end{equation*}
\end{proposition}

\begin{proof}
Let $\mathcal{F}$ be the set of all $\lambda$-Lipschitz functions $f:[0,1]^n \rightarrow [0,1]^n$. Let $\mathcal{U}$ be the set of games that correspond to $\mathcal{F}$ by the above-presented construction.

Every $\ne$ algorithm with success probability $p$ on the set of games $\mathcal{U}$ induces an $\fp$ algorithm with success probability $p$. The induced $\fp$ algorithm will follow the $\ne$ algorithm. Each query $(a,b)$ will be mapped to the query $f(a)$. Note that by equations (\ref{eq:payoff-def1}) and (\ref{eq:payoff-def2}), $f(a)$ is sufficient to get the answer for the payoff profile $(u_i,v_i)$. Therefore the $\fp$ algorithm can indeed follow the $\ne$ algorithm. Finally, given a $\frac{3}{4k^2}$-WSNE the algorithm can compute the values $(\alpha_i)_{i\in [n]}$ and to find the approximate fixed point $c=(c_i)$ of $f$.
\end{proof}

\subsubsection{Binary-action games}\label{sec:ne->fp-bin}

The idea is to use similar idea to those presented above (Section \ref{sec:ne->fp-const}). Here, instead of $2n$ players with $k+1$ actions for each player, we construct a game with $2n(k-1)$ players with two actions for each player.
In the above reduction the action set of each player $i$ is $\{0,\frac{1}{k},\frac{2}{k},...,1\}$. We replace every player $i$ (in each one of the two groups) by $k-1$ agents  $(i,\frac{1}{k}),(i,\frac{2}{k}),...,(i,\frac{k-1}{k})$. The action set of agent $(i,\frac{j}{k})$ is $A_{i,j}:=\{\frac{j-1}{k},\frac{j+1}{k}\}$. For simplicity of notations, it will be convenient to have two additional agents: agent $(i,\frac{0}{k})$ with single action $A_{i,0}=\{\frac{1}{k}\}$, and agent $(i,\frac{k}{k})$ with single action\footnote{Since these two additional agents have a single action they are not counted as players.} $A_{i,k}=\{\frac{k-1}{k}\}$. When agent $(i,\frac{j}{k})$ plays the action $(i,\frac{j\pm 1}{k})$ we will say that agent $(i,\frac{j}{k})$ is \emph{ pointing on player} $(i,\frac{j\pm 1}{k})$. 
 
For every $i$, given an action profile of the agents $(i,\frac{j}{k})_{j=0}^k$, the \emph{realized value of the $i$th agents} is defined to be $r_i:=\frac{c+0.5}{k}$ where $(c,c+1)$ is the minimal pair of agents that pointing on each other. In other words, the minimal $c$ such that $(i,\frac{c+1}{k})$ plays $\frac{c}{k}$). Note that the realized value is well defined because the last agent $\frac{k}{k}$ always play $\frac{k-1}{k}$.

The payoff functions of the agents are defined similar to the previous reduction (Section \ref{sec:ne->fp-const}), but with respect to the \emph{realized values}.

The payoff of agent $(i,\frac{j}{k})$ for a player $i$ in the first group is defined by $u_{i,\frac{j}{k}}=-|a_{i,\frac{j}{k}}-r_i|^2$, where $r_i$ is the realized value of the $i$-agents in the second group. Similarly, the payoff of agent $(i,\frac{j}{k})$ for a player $i$ in the second group is defined by $v_{i,\frac{j}{k}}=-|a_{i,\frac{j}{k}}-f(r)|^2$, where $r$ is the profile of the realized value of all agents in the first group.

Every mixed action profile of the agents in the second group $y$ induces a distribution $\rho$ on the realized values of the second group. The payoffs of agent $(i,\frac{j}{k})$ that facing a mixed action profile can be written as
\begin{equation*}
u_{i,\frac{j}{k}}(a,y)=-(a_{i,\frac{j}{k}}-\mathbb{E}(\rho_i))^2-Var(\rho_i).
\end{equation*}
Therefore, the difference in payoffs for the two possible actions of agent $(i,\frac{j}{k})$ is given by
\begin{equation}\label{eq:dif}
  \begin{aligned}
d_{i,j}&=u_{i,\frac{j}{k}}\left(\frac{j-1}{k},y\right)-u_{i,\frac{j}{k}}\left(\frac{j+1}{k},y\right)\\
&=-\left(\frac{j-1}{k}-\mathbb{E}(\rho_i)\right)^2+\left(\frac{j+1}{k}-\mathbb{E}(\rho_i)\right)^2 \\
&=\left(\frac{j+1}{k}-\frac{j-1}{k}\right)\left(\frac{j+1}{k}+\frac{j-1}{k}-2\mathbb{E}(\rho_i)\right)=\frac{4}{k}\left(\frac{j}{k}-\mathbb{E}(\rho_i)\right)
   \end{aligned}
\end{equation}

In every $\frac{1}{k^2}$-WSNE the mixed actions of the $i$th agents satisfy:

(1) Every agent $(i,\frac{j}{k})$ such that $\frac{j}{k} < \mathbb{E}(\rho_i)-\frac{1}{4k}$ plays the action $\frac{j+1}{k}$ with probability 1. This follows from equation \eqref{eq:dif}, because in such a case we have $d_{i,j}<-\frac{1}{k^2}$, which implies that the action $\frac{j+1}{k}$ is better than the action $\frac{j+1}{k}$ by at least $\frac{1}{k^2}$.

(2) Similarly, every agent $(i,\frac{j}{k})$ such that $\frac{j}{k} > \mathbb{E}(\rho_i)+\frac{1}{4k}$ plays the action $\frac{j+1}{k}$ with probability 1.

Let $c_i$ be the closest integer multiple of $\frac{1}{k}$ to $\mathbb{E}(\rho_i)$ (formally, $c_i=\frac{[k\mathbb{E}(\rho_i)]}{k}$, where $[x]$ is the closest integer to $x$). By observations (1) we obtain that every agent $(i,\frac{j}{k})$ for $\frac{j}{k}<c_i$ plays $\frac{j+1}{k}$. By observations (2) we obtain that every agent $(i,\frac{j}{k})$ for $\frac{j}{k}>c_i$ plays $\frac{j-1}{k}$. Therefore the realized value of the $i$th agents is either $r_i=c_i-0.5$ or $r_i=c_i+0.5$, and in any case $|r_i-\mathbb{E}(\rho_i)|\leq \frac{1}{k}$.

By repeating the same arguments for the second group of players we obtain that in every $\frac{1}{k^2}$-WSNE the realized value $s_i$ of the $i$th agents satisfies $|s_i-\mathbb{E}_{r\sim \omega}(f_i(r))|\leq \frac{1}{k}$ where $\omega$ is the distribution of the realized value profiles for the first group.

Let $(\rho,\omega)$ be the distributions over the realized values profiles of both groups in an $\frac{1}{k^2}$-WSNE. Let $(r,s)$ be a profile of realized values in the support of $(\rho,\omega)$. We claim that $r$ is an approximate approximate fixed point of $f$.

\begin{align*}
|r_i-s_i|\leq |r_i-\mathbb{E}(\rho_i)|+|\mathbb{E}(\rho_i) - s_i| \leq \frac{1}{k}+ \frac{0.5}{k} 
\end{align*}
and
\begin{align*}
|s_i-f_i(r)|\leq |s_i-\mathbb{E}_{r\sim \omega}(f_i(r))|+|\mathbb{E}_{r\sim \omega}(f_i(r)) - s_i| \leq \frac{1}{k}+ \lambda \frac{0.5}{k} 
\end{align*}
Therefore, $|r_i-f_i(r)|\leq \frac{5+\lambda}{2k}$.

If we set $k=\lceil\frac{5+\lambda}{2\varepsilon} \rceil$ we have that $|r_i-f_i(r)|\leq \varepsilon$.

This construction yields the following result.

\begin{proposition}\label{pro:ne->fp-bin}
For every $p>0, n\geq 1, \lambda \geq 0$, and $\varepsilon \geq 0$, set $k=\lceil \frac{ 5+ \lambda }{2\varepsilon}\rceil$. Then we have
\begin{equation*}
QC_p(\ne(2n(k-1), 2, \frac{1}{k^2})) \geq QC_p(\fp(n,\lambda,\varepsilon)).
\end{equation*}
\end{proposition}

The proof is similar to the proof of Proposition \ref{pro:ne->fp}.

\subsection{From Approximate Fixed Point to End of Path}\label{sec:fp->ep}

We denote by $G(n,k)$ the graph of the $n$-\emph{dimensional grid of size} $k$. The set of vertices of $G(n,k)$ is $(c_i)_{i=1}^n$, where $c_i\in [k]$. There is an edge between $(c_i)_{i=1}^n$ and $(c'_i)_{i=1}^n$ iff $c_i=c'_i$ for all indexes $i$ except for one index $j$ for which $|c_j-c'_j|=1$. Note that $G(n,2)$ is the hypercube. A path on a graph will be called \emph{simple} if it contains no cycles.

In this section we prove a reduction from the approximate fixed point problem $\fp(n,79,1/88)$ to the end-of-a-simple-path problem:

$\ep(n)$:

INPUT- Simple path on the $n$-dimensional hypercube ($G(n,2)$) that starts at $(1,1,...,1)$.

OUTPUT- The end-of-path vertex.

QUERIES- Each query is a vertex $v$ of the hypercube. The answer is whether the path visits this vertex. If it does, then, in addition, the black box reports the path's previous and next visits.

The query complexity of an approximate fixed point has been studied by Hirsch, Papadimitriou and Vavasis \cite{HPV}, where they show an exponential (in the dimension) lower bound. We cannot use the result of \cite{HPV} straightforwardly, because they consider \emph{deterministic} settings, whereas we are interested in \emph{probabilistic} settings. Nevertheless, we will rely on one part of their proof. Namely, we will use their reduction from the approximate fixed-point problem to the end-of-a-simple-path problem. The formal treatment of this reduction is quite involved; a 13-pages proof of the reduction appears in \cite{HPV}. We present here only a brief intuition for the reduction and the result itself.

\begin{figure}[h]
\begin{center}
\caption{A path in $G(2,3)$.}\label{pic:path}
\includegraphics[scale=0.6]{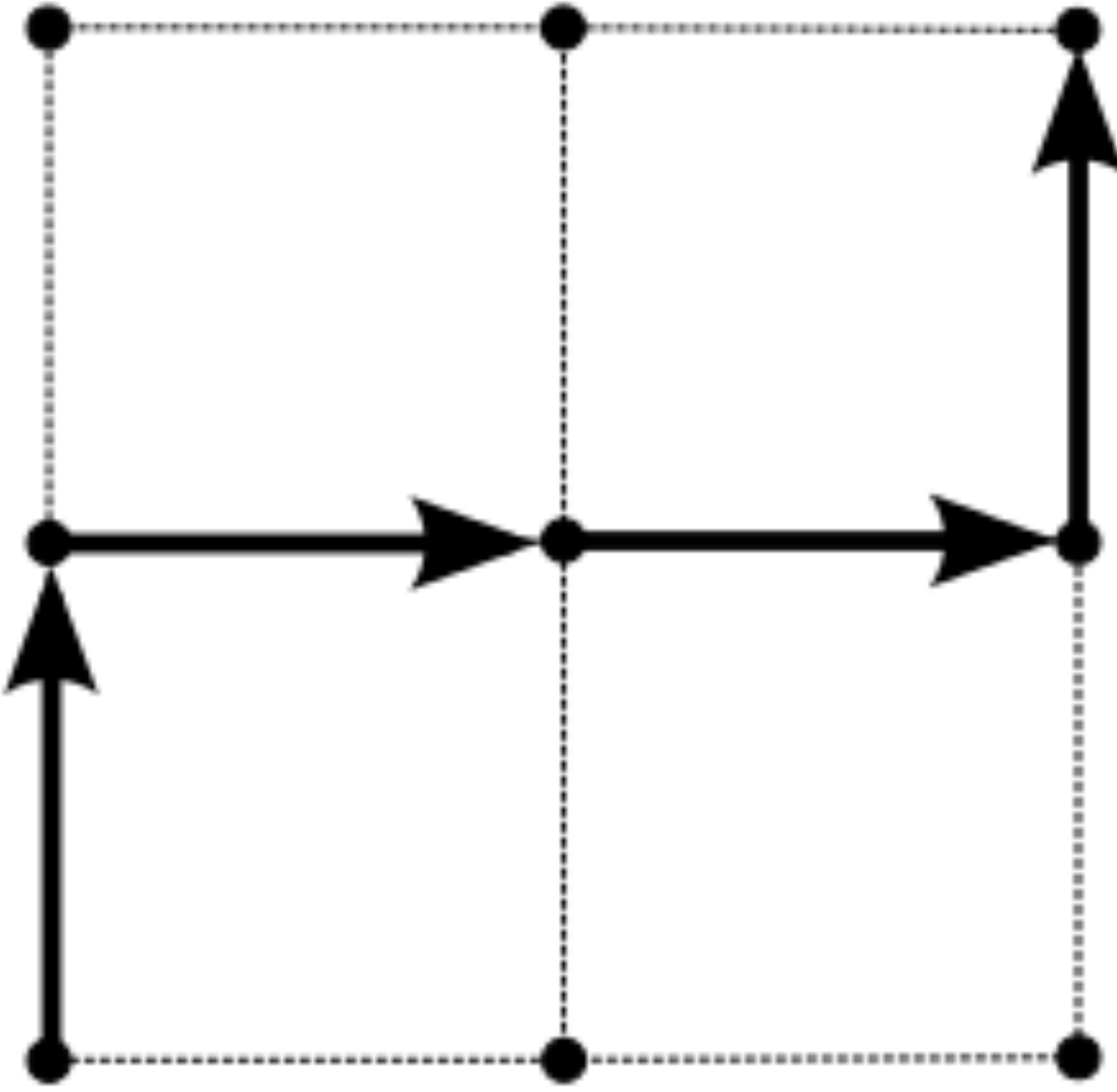}
\end{center}
\end{figure}

\begin{figure}[p]
\begin{center}
\caption{the function that corresponds to the path in Figure \ref{pic:path}. Each arrow from point $a$ to point $b$ represents $f(a)=b$. The top-right $3\times 3$ squares corresponds to the nodes in $G(2,3)$. The rest of the squares are required to complete the definition of the function. The black dot denotes the unique \emph{exact} fixed point of the function. Approximate fixed points are represented by short arrows.}\label{pic:arrows}
\includegraphics[scale=0.25]{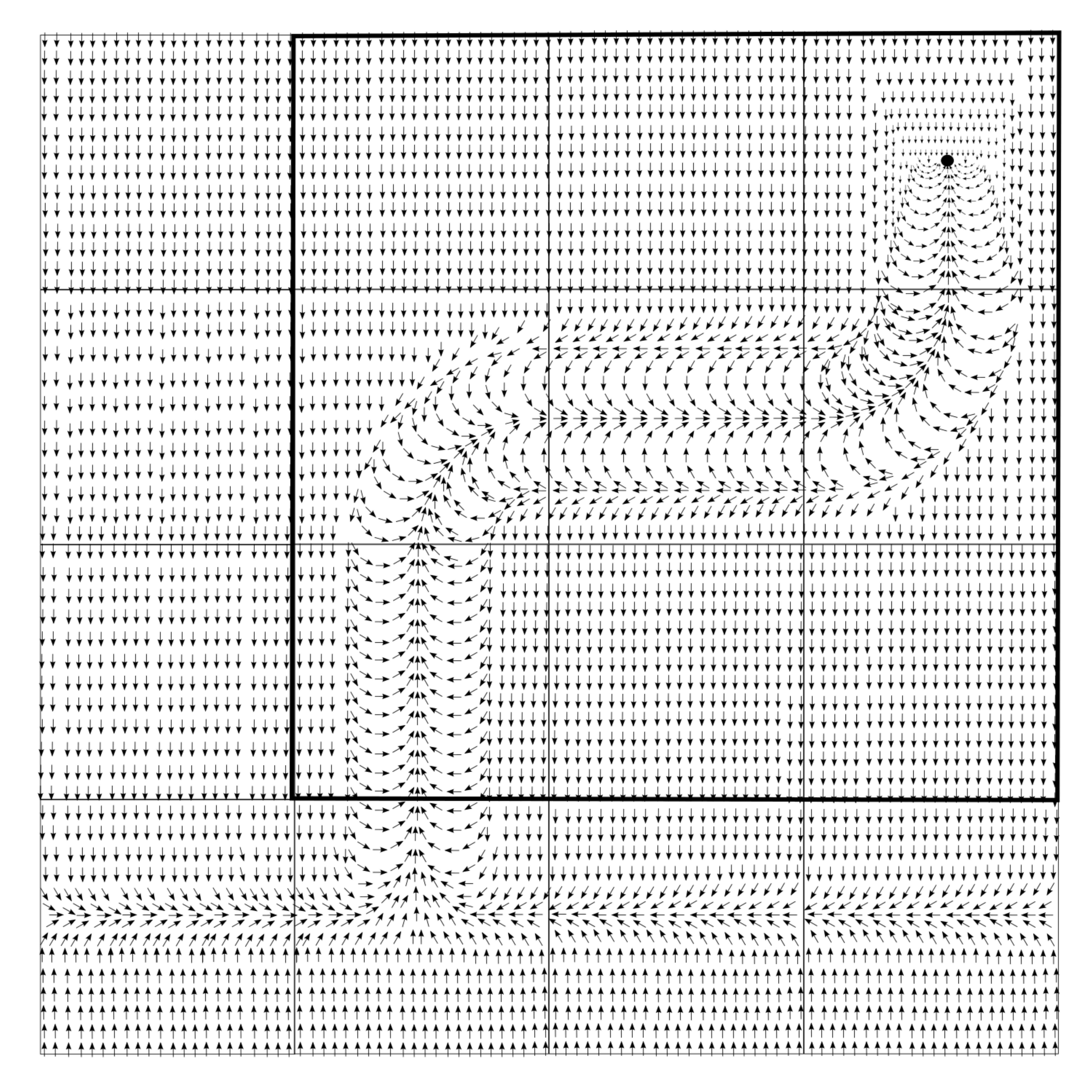}
\end{center}
\end{figure}

We divide the $n$-dimensional cube $[0,1]^n$ into small cubes of edge size $\delta=\Theta(1/k)$. The small cubes with the neighboring relation form the graph $G(n,k)$. Let us consider a specific example that explains the idea of the construction. Given a path on $G(2,3)$ with no cycles and with the starting point $(1,1)$, for example, the path presented in Figure \ref{pic:path}, we define a function $f$ on small cubes of size $\delta=1/4$ as demonstrated in Figure \ref{pic:arrows}. By observing Figure \ref{pic:arrows}, the reader may be convinced that the function $f$ has the following properties:

\begin{enumerate}
\item $f$ is Lipschitz continuous. This corresponds to the fact that the arrows are changed in a smooth manner.

\item All the approximate fixed points of $f$ are placed on the end-of-path square. This corresponds to the fact that all the arrows outside the top-right square have constant length, i.e., $f(x)$ is far from $x$ for every $x$ that is not in the top-right square.

\item For every small square, $f$ depends only on the following parameters of the path, namely, whether the path goes through this square, and if so, what the previous and the next visits of the path are. This corresponds to the locality property of the  picture. If the path does not go through a square, then all the arrows in this square point downward. If the path goes though some square, it is enough to consider the previous and the next visits of the path in order to define $f$ for this square.
\end{enumerate}

In \cite{HPV}, Hirsch et al. show that the construction of such a function $f$ can be generalized to every grid size $k$ and to every dimension $n$. In particular, they introduce a general construction that works for every approximation accuracy $\varepsilon$ and every Lipschitz constant $\lambda$. For our purposes, it will be sufficient to focus on the case\footnote{The cube size $1/6$, approximation accuracy $\varepsilon=1/88$, and Lipschitz constant $\lambda=79$ were chosen so that the necessary inequalities for the construction in \cite{HPV} would be satisfied. Explicitly, the necessary inequalities are $\delta \geq \frac{\lambda}{1200 \varepsilon}$ (see \cite{HPV} page 406), and $2\geq \frac{(1-10 \varepsilon)\lambda}{1200\varepsilon}-3$ (see \cite{HPV} page 411).}  $\varepsilon=1/88$, $\lambda=79$. For these values the construction in \cite{HPV} states the following:

Given a simple path on $G(n,2)$ that starts at the point $(1,1,...,1)$, and ends at the point $(e_i)_{i=1}^n$ for $e_i=1,2$, we divide the cube $[0,1]^n$ into small cubes with edge cube size\footnotemark[\value{footnote}] $\delta=1/6$. The vertex $(v_i)_{i=1}^n$ in the hypercube corresponds to the small cube $\times_i [\frac{1+v_i}{6}, \frac{2+v_i}{6}]\subset [0,1]^n$. By this correspondence, the hypercube is equivalent to all the small cubes that are contained in $[2/6,4/6]^n$. The starting point $(1,1,...,1)$ is equivalent to the small cube $[2/6,3/6]^n$, and the end point corresponds to the small cube $\times_i [\frac{1+e_i}{6}, \frac{2+e_i}{6}]$, which is denoted by $E$. 

\begin{remark}\label{rem:pro}
Hirsch et al. \cite{HPV} construct a function $f:[0,1]^n \rightarrow [0,1]^n$ with the following properties:

\begin{description}
\item[(P1)] $f$ is 79-Lipschitz (Lemma 10 in \cite{HPV}).

\item[(P2)] $||f(x)-x||_\infty \geq 1/88$ for every $x \notin E$ (Lemma 10 in \cite{HPV}).

\item[(P3)] The value of $f$ on each small block in $[2/6,4/6]^n$ depends only on the local behavior or the path in the corresponding vertex of the hypercube: whether the path goes through this vertex, and if so what the previous and the next visits of the path are (Lemma 11 in \cite{HPV}).
\end{description}
\end{remark}

This construction yields the following result:

\begin{proposition}\label{pro:fp->ep}
$QC_p(\fp(n,79,1/88)) \geq QC_p(\ep(n))$ for every $n\geq 1$ and every $0<p<1$. 
\end{proposition}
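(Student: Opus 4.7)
The plan is to prove the inequality by reducing $\ep(n)$ to $\fp(n,79,1/88)$: given any algorithm $\mathcal{A}$ that solves $\fp(n,79,1/88)$ in at most $T$ queries with success probability $p$, I will build an algorithm $\mathcal{B}$ that solves $\ep(n)$ in at most $T$ queries with the same success probability $p$. On input a simple path $\alpha$ on the hypercube starting at $(1,\ldots,1)$, algorithm $\mathcal{B}$ will simulate $\mathcal{A}$ on the Hirsch--Papadimitriou--Vavasis function $f=f(\alpha)$ guaranteed by Remark \ref{rem:pro}. By (P1), $f$ is $79$-Lipschitz, so it is a legal $\fp$ input.

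The simulation proceeds query by query. When $\mathcal{A}$ asks for $f(x)$ at some $x \in [0,1]^n$, there are two cases. If $x$ lies outside $[2/6,4/6]^n$, then the small cube containing $x$ does not correspond to a hypercube vertex, and its $f$-value is determined by the fixed ``background'' completion of the HPV construction independently of $\alpha$; $\mathcal{B}$ computes it with zero $\ep$ queries. Otherwise $x$ lies in a unique small block $\times_i[\tfrac{1+v_i}{6},\tfrac{2+v_i}{6}]$ for a vertex $v\in\{1,2\}^n$, and by property (P3) the value $f(x)$ depends only on whether $\alpha$ visits $v$ and, in that case, on the previous and next visits of $\alpha$ at $v$. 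A single $\ep$ query on $v$ returns exactly these data, so $\mathcal{B}$ can compute $f(x)$ using at most one $\ep$ query. Hence $\mathcal{B}$ uses no more queries than $\mathcal{A}$.

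For correctness, with probability at least $p$, algorithm $\mathcal{A}$ returns some $x^* \in [0,1]^n$ with $\|f(x^*)-x^*\|_\infty \le 1/88$. Property (P2) then forces $x^* \in E = \times_i[\tfrac{1+e_i}{6},\tfrac{2+e_i}{6}]$, where $(e_i)_{i=1}^n$ is the endpoint of $\alpha$. Algorithm $\mathcal{B}$ reads off $e_i \in \{1,2\}$ from the coordinate $x^*_i$ (by checking which of the two neighboring grid intervals inside $[2/6,4/6]$ it lies in) and outputs $(e_i)_{i=1}^n$, which is precisely the required end-of-path vertex. Since the query budget is preserved and the success probability is at least $p$, this gives $QC_p(\ep(n)) \le QC_p(\fp(n,79,1/88))$.

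The only nontrivial point is that property (P3) has been packaged by HPV in exactly the right form: the local data on which $f$ depends at a hypercube vertex $v$ (visit/no visit, together with the previous and next visits on the path) coincides with what the $\ep$ oracle returns on a single query at $v$. This is what makes the reduction query-preserving, and it is the place where one must quote the full HPV construction rather than just the Lipschitz and approximate-fixed-point properties. Everything else in the argument---the $\alpha$-independent background completion of $f$ outside $[2/6,4/6]^n$, and the rounding step to recover $(e_i)$ from $x^* \in E$---is mechanical, so I do not expect further obstacles.
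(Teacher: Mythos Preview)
Your proof is correct and follows essentially the same approach as the paper: simulate an $\fp$ algorithm on the HPV function $f(\alpha)$, answer each point query by a single $\ep$ vertex query via property (P3), and recover the endpoint from any $1/88$-fixed point via property (P2). The only cosmetic difference is that the paper maps queries $x\notin[2/6,4/6]^n$ to a dummy $\ep$ query at $(1,\ldots,1)$, whereas you (equivalently and slightly more economically) note that $f$ is $\alpha$-independent there and answer with zero $\ep$ queries.
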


\begin{proof}
Let $\mathcal{P}$ be the set of all simple paths on the hypercube with starting point $(1,1,...,1)$. Let $\mathcal{F}$ be the corresponding set of functions $f:[0,1]\rightarrow [0,1]$ according to the construction of \cite{HPV} presented above.

Every $\fp$ algorithm with success probability $p$ on the set of functions $\mathcal{F}$ induces an $\ep$ algorithm with success probability $p$. The induced $\ep$ algorithm will follow the $\fp$ algorithm. Each query $x\in [2/6,4/6]^n$ will be mapped to a query of the corresponding vertex. For formal purposes, we should define also the map for queries $x\notin [2/6,4/6]^n$. Clearly those are meaningful queries because all the functions $f\in \mathcal{F}$ have the same values for $x\notin [2/6,4/6]^n$. For the sake of formality, we say that a query $x\notin [2/6,4/6]^n$ corresponds to the query $(1,1,...,1)$. By property (P3) the answer for the $\fp$ query contains at least as much information as the answer in the $\ep$ query. Therefore, in the $\ep$ model it is indeed possible to follow the $\fp$ algorithm. Finally, by property (P2), once the $\fp$ algorithm finds a $1/88$-fixed point, then the $\ep$ algorithm finds the end of path.
\end{proof}

\subsection{From End of Path to Hit the Path}\label{sec:ep->hp}

We would like to analyze the end-of-simple-path problem in a probabilistic setting. A similar analysis was recently conducted by Hart and Nisan \cite{HN}, who showed that the end-of-(general)-path problem is hard even in probabilistic settings. The difference between our problem and the problem in \cite{HN} is that we assume that the path has no cycles, whereas in \cite{HN} there is no such assumption. As we will see in this section, we can use similar arguments to derive a hardness result for our end-of-simple-path problem.

The idea in \cite{HN} is that finding the end of a random walk on the hypercube of length $2^{n/3}$ is hard. We cannot use directly the idea of a random path because the random walk has cycles with probability close to 1. Nevertheless, we can cut those cycles and get a simple path; this new path is no longer a random walk on the hypercube, but it does preserve the essential properties for the hardness result, as we will see below. Formally, given a random walk $v_1,v_2,...,v_{2^{n/3}}$, the \emph{path after cutting the cycles} is defined by the following iterative procedure: pick the minimal $i$ such that there exists $j>i$ where $v_i=v_j$. Replace the existing path by $v_1,v_2,...,v_i,v_{j+1},...,v_{2^{n/3}}$. Repeat this cutting process until no vertex appears twice in the path. We denote the resulting path without cycles by $w_1,w_2,...,w_L$, where $L$ is a random variable.

\begin{lemma}\label{lem:long-cyc}
Let $(v_i)_{i=1}^{2^{n/3}}$ be a random walk on the $n$-dimensional hypercube. Then $(v_i)$ contains no cycles of size greater than $n^2$ with probability of at least $1-2\cdot 2^{-n/3}$.
\end{lemma}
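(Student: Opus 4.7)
My plan is to reduce the lemma to a single-pair return probability and then take a union bound. A cycle of size $\ell$ in $(v_i)_{i=1}^{2^{n/3}}$ is witnessed by a pair $1\leq i<j\leq 2^{n/3}$ with $j-i=\ell$ and $v_i=v_j$, so
\begin{equation*}
\Pr\bigl(\exists\text{ cycle of size} > n^2\bigr) \;\leq\; \sum_{\substack{1\leq i<j\leq 2^{n/3}\\ j-i>n^2}} \Pr(v_i=v_j) \;\leq\; \binom{2^{n/3}}{2}\cdot \max_{\ell>n^2}\Pr(v_\ell=v_0).
\end{equation*}
Since $\binom{2^{n/3}}{2}\leq \tfrac{1}{2}\cdot 2^{2n/3}$, it is enough to prove the single-pair bound $\Pr(v_\ell=v_0)\leq 3\cdot 2^{-n}$ for all $\ell>n^2$; the union bound then yields probability at most $\tfrac{3}{2}\cdot 2^{-n/3} < 2\cdot 2^{-n/3}$, as required.

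For the single-pair estimate I would use the Fourier (spectral) structure of the random walk on $\{0,1\}^n$. The transition operator is diagonalized by the characters $\chi_S(x)=(-1)^{\sum_{i\in S}x_i}$, $S\subseteq[n]$, with eigenvalues $1-2|S|/n$, so grouping by $k=|S|$ gives the exact identity
\begin{equation*}
\Pr(v_\ell=v_0) \;=\; \frac{1}{2^n}\sum_{k=0}^{n}\binom{n}{k}\left(1-\frac{2k}{n}\right)^{\ell}.
\end{equation*}
The extreme terms $k=0$ and $k=n$ contribute at most $2^{-n}$ each, for a total of $2\cdot 2^{-n}$. For the middle range $1\leq k\leq n-1$ I would use $|1-2k/n|\leq 1-2/n$ and $\sum_{k}\binom{n}{k}\leq 2^n$ to bound the combined contribution by $(1-2/n)^\ell\leq e^{-2\ell/n}\leq e^{-2n}$ for $\ell>n^2$, which is at most $2^{-n}$ since $e^{-2}<1/2$. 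Altogether $\Pr(v_\ell=v_0)\leq 3\cdot 2^{-n}$.

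The main obstacle is precisely this middle range: the binomial coefficients $\binom{n}{k}$ inflate the spectral sum by up to $2^n$, which would swamp the $2^{-n}$ prefactor were it not for the geometric decay of $(1-2/n)^\ell$. In fact $\ell>n^2$ is essentially the smallest scale at which $(1-2/n)^\ell$ decays faster than $2^n$ grows, which is what makes $n^2$ the natural threshold in the lemma. Once the spectral estimate is in hand, the union bound displayed above completes the proof; the total pair count $\binom{2^{n/3}}{2}$ was chosen precisely so that $2^{2n/3}\cdot 2^{-n}=2^{-n/3}$ delivers the stated failure probability.
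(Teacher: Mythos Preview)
Your argument is correct and structurally identical to the paper's: union-bound over the at most $2^{2n/3}$ pairs $(i,j)$ with $j-i>n^2$, using a return-probability estimate of order $2^{-n}$ for each pair. The only difference lies in how that estimate is obtained. The paper simply invokes the $O(n\log n)$ mixing time of the hypercube random walk (citing Diaconis--Graham--Morrison) to assert $\Pr(v_i=v_j)\le 2\cdot 2^{-n}$ whenever $j-i>n^2$, and then sums. Your explicit Fourier computation $\Pr(v_\ell=v_0)=2^{-n}\sum_{k}\binom{n}{k}(1-2k/n)^{\ell}$ is more elementary and self-contained; it also sidesteps the minor nuisance that the non-lazy hypercube walk is periodic (so a bare total-variation mixing statement needs a word of care), since your eigenvalue bound handles both parities directly. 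The paper's route buys brevity via a citation; yours buys a proof with no external dependence and makes transparent why $n^2$ is the natural threshold.
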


\begin{proof}
The mixing time of the random walk on the hypercube is known to be $O(n \log n)$ (see, e.g., \cite{DGM}). This implies that for every pair of times $i,j$ where $i<j+n^2$, the probability of $v_i=v_j$ is at most $2\cdot 2^{-n}$. Therefore, summing over all such pairs of $i,j$ (we have at most $2^{2n/3}$ pairs), we get that the probability that at least one of these events will happen is at most $2\cdot 2^{-n/3}$.  
\end{proof}

Now we proceed similar to \cite{HN}. In order to prove that there is no algorithm that finds $w_L$ with high probability, we define the following hit-the-path ($\hp(n)$) game between the algorithm and the adversary. The game is played for $T=2^{n/3}/n^4$ steps. The adversary chooses a path without cycles $w_1,..., w_L$. At each step $1\leq t \leq T$, the algorithm chooses a vertex $q_t$ when it observes the vertices revealed up to time $t-1$: $w_1,w_2,...,w_{n^2(t-1)}$, i.e., $q_t=q_t((w_i)_{i=1}^{n^2(t-1)})$. The goal of the algorithm is to choose a \emph{future vertex}, a vertex that is visited by the path $n^2$ steps later than the last revealed vertex ($w_{n^2(t-1)}$), i.e., $q_t=w_i$ for $i>tn^2$. After the algorithm chooses a vertex $q_t$, the vertices $w_{n^2(t-1)+1}, w_{n^2(t-1)+2},...,w_{n^2t}$ are revealed to the algorithm. The algorithm wins if it has succeeded to in choosing at least one future vertex. Otherwise the adversary wins.

\begin{lemma}
Using the mixed strategy $w_1,...,w_L$, which results from cutting the cycles of a random walk of size $2^{n/3}$, the adversary guarantees a win with probability of at least $1-3\cdot 2^{-n/3}$ in the $\hp(n)$ game.
\end{lemma}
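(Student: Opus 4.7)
The plan is to mirror the Hart--Nisan end-of-path argument, using Lemma~\ref{lem:long-cyc} to absorb the discrepancy introduced by the cycle-cutting step. Write $M=2^{n/3}$ and let $A$ be the event of Lemma~\ref{lem:long-cyc}: no cycle in $(v_i)_{i=1}^M$ has length greater than $n^2$. Since $\Pr[A^c]\le 2\cdot 2^{-n/3}$, it suffices to show $\Pr[\text{algorithm wins}\cap A]\le 2^{-n/3}$; adding the two estimates then gives the claimed $3\cdot 2^{-n/3}$.

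The key structural observation on $A$ is that each cycle cut erases at most $n^2$ cut-path positions, since the walk vertices removed by one cut all sit in a walk interval of length at most $n^2$. Writing $w_k=v_{\rho_k}$ with $\rho_1<\cdots<\rho_L$ strictly increasing, this gives $\rho_{n^2 t+1}-\rho_{n^2(t-1)}\ge n^2+1$, so every target $w_i$ ($i>n^2 t$) of the algorithm's $t$-th query corresponds to a walk vertex occurring more than $n^2$ walk steps after the last revealed cut-path vertex. For the one-step bound I would introduce a stopping time $\tau_t$ of the walk such that on $A$ the revealed prefix $w_1,\ldots,w_{n^2(t-1)}$ (and hence $q_t$) is $\mathcal{F}_{\tau_t}$-measurable while every target has walk index $>\tau_t+n^2$. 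The lazy random walk on $\{0,1\}^n$ reaches total-variation distance at most $2^{-n}$ from uniform within $n^2$ steps, so $\Pr[v_j=q_t\mid \mathcal{F}_{\tau_t}]\le 2\cdot 2^{-n}$ for every admissible $j$; summing over the at most $M$ possibilities gives $\Pr[\text{win at step }t\cap A]\le 2M\cdot 2^{-n}=2^{1-2n/3}$, and a union bound over the $T=2^{n/3}/n^4$ rounds yields $\Pr[\text{wins}\cap A]\le T\cdot 2^{1-2n/3}<2^{-n/3}$.

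The main technical obstacle is constructing $\tau_t$. The natural candidate---the first walk step at which the incremental cut path reaches length $n^2 t$---is inadequate, because a cascade of later cuts could cumulatively shrink the incremental path back below length $n^2(t-1)$ and thereby alter the revealed prefix that determines $q_t$. I would resolve this by taking $\tau_t$ to be the first walk step at which the incremental cut path reaches a slightly larger length $n^2(t-1)+B$ for a polynomial buffer $B$, and augmenting the bad event by the rare scenario that the incremental cut path ever drops below length $n^2(t-1)$ after time $\tau_t$; on $A$ each cut has size at most $n^2$, and such a drop would require many cuts clustered in time, which are themselves rare by the mixing properties already invoked, so a concentration argument keeps this augmentation at $o(2^{-n/3})$ and absorbs it into the final bound.
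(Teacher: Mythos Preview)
Your proposal follows the same two-stage skeleton as the paper (condition on the no-long-cycle event $A$ of Lemma~\ref{lem:long-cyc}, then bound the per-round hitting probability via mixing and take a union bound), and you correctly identify the one real difficulty: the cut path $(w_i)$ is a global function of the full walk $(v_i)$, so the revealed prefix $w_1,\dots,w_{n^2(t-1)}$ is not obviously measurable with respect to any walk filtration. Where you diverge from the paper is in how you handle this.

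The paper does \emph{not} construct a stopping time. Instead it uses a monotonicity trick: it modifies the $\hp(n)$ game in the algorithm's favour so that the analysis can be carried out entirely on the raw walk $(v_i)$. Concretely, at each round reveal $n^4$ consecutive \emph{walk} vertices (rather than $n^2$ consecutive cut-path vertices), and declare the algorithm a winner if it hits any future $v$-vertex (not just a future $w$-vertex). On $A$, any block of $n^4$ consecutive walk vertices contains at least $n^2$ cut-path vertices, and every future $w$-vertex is in particular some future $v$-vertex, so the modified game dominates the original. But in the modified game the information revealed by round $t$ is exactly $\mathcal F_{n^4 t}$, the targets are the $v_j$ with $j>n^4 t$, and the mixing bound applies with no further work: $\Pr[v_j=q_t\mid\mathcal F_{n^4(t-1)}]\le 2\cdot 2^{-n}$, then sum over $j$ and over $t\le T$.

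Your alternative route via a buffered stopping time and a ``clustered cuts are rare'' argument is not wrong in spirit but the justification you give is off. Mixing says nothing about \emph{short} cycles, and short cycles are in fact frequent: $\Pr[v_i=v_{i+2}]=1/n$, so the expected number of length-$2$ returns along the walk is $\Theta(2^{n/3}/n)$. What event $A$ actually buys you is that each single step of the walk can shrink the loop-erased path by at most $n^2$ positions; turning that into a usable buffer bound is possible, but it amounts to the same counting that underlies the paper's ``$n^4$ walk vertices contain at least $n^2$ cut-path vertices'' observation. Using that observation directly, as the paper does, is strictly simpler and avoids the stopping-time construction altogether.
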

  
\begin{proof}
Let $v_1,v_2,...,v_{2^{n/3}}$ be the random walk that induces the path $w_1,...,w_L$. For the bound analysis, we assume that if $(v_i)$ has at least one cycle of size greater than $n^2$, then the algorithm automatically wins. By Lemma \ref{lem:long-cyc} this yields a probability of at most $2\cdot 2^{-n/3}$ for the algorithm winning.

In the remaining case, let us change the roles of the game $\hp(n)$ in favor of the algorithm. First, we will provide the algorithm with more information at each step. Instead of revealing at each step $n^2$ sequential values of $(w_i)$, we will reveal $n^4$ sequential values of $(v_i)$. Note that $n^4$ sequential values of $(v_i)$ contain at least $n^2$ values of $(w_i)$ because $(v_i)$ has no cycles of size greater than $n^2$. Second, we will let the algorithm win not only if it hits a future vertex of $(w_i)$ but also if it hits a future vertex of $(v_i)$. 

Using again the mixing time of the random walk argument (see \cite{DGM}), we know that at each step the probability of hitting every future vertex $v_i$ is at most $2\cdot 2^{-n}$. Summing over all future vertices and over all steps we get that in the new game the probability that the algorithm will win is at most 
\begin{equation*}
2\cdot 2^{-n} 2^{\frac{n}{3}} \frac{2^{\frac{n}{3}}}{n^4} \leq 2^{\frac{n}{3}}. 
\end{equation*}

Summing this probability with the probability of automatic winning yields the result.

\end{proof}

\begin{proposition}\label{pro:ep-hard}
$QC_p(\ep(n))\geq 2^{n/3}/n^4$ for $p=3\cdot 2^{-n/3}$ and for every $n\geq 1$.
\end{proposition}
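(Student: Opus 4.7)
The plan is to reduce $\ep(n)$ to the hit-the-path game $\hp(n)$ and invoke the preceding lemma. Given any probabilistic $\ep$ algorithm $A$ using at most $T = 2^{n/3}/n^4$ queries, I will construct an $\hp$ algorithm $A'$ such that the event ``$A$ outputs the true endpoint $w_L$'' is (up to lower-order loss) contained in the event ``$A'$ wins''. The preceding lemma then bounds $\Pr[A \text{ succeeds}] \leq 3\cdot 2^{-n/3}$ on the cut-random-walk distribution of paths, and Yao's minimax principle extracts a deterministic hard simple path, giving $QC_p(\ep(n)) \geq 2^{n/3}/n^4$ for $p = 3\cdot 2^{-n/3}$ as claimed.

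The algorithm $A'$ runs $A$ as a subroutine, forwarding each of $A$'s queries to the $\hp$ oracle at successive rounds and using the revealed path batches to simulate plausible $\ep$-oracle responses: a query $q_t$ that matches some already-revealed $w_j$ is answered with ``on path'' together with the adjacent path vertices $w_{j-1}$ and $w_{j+1}$, while a query that misses the revealed portion is answered with ``not on path''. The second answer is either honest or, if $q_t$ is secretly a future path vertex, already witnesses that $A'$ has won round $t$. The boundary artifacts in the simulation (queries landing at the rightmost revealed index, or the possible drift between the partial cut-path computed from $v_1,\ldots,v_{n^4 t}$ and the true cut-path $w_1,\ldots,w_L$) are handled by coupling $A'$ with the strengthened $v$-revealed variant of $\hp$ from the preceding lemma's proof, together with the ``no long cycles'' event of Lemma~\ref{lem:long-cyc} (probability at least $1-2\cdot 2^{-n/3}$): under this conditioning, every cycle has length at most $n^2$, so the $v$-reveals suffice to reconstruct a safe prefix of the cut-path at each round.

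The critical scheduling step is to reserve round $T-1$ of $\hp$ for submitting $A$'s output guess $w_L^{\mathrm{guess}}$ as the query $q_{T-1}$. Whenever $A$ succeeds, $q_{T-1}$ equals $w_L = v_N$, the endpoint of the underlying random walk. In the strengthened game the last revealed walk-index at round $T-1$ is $n^4(T-1) = 2^{n/3}-n^4$, which is strictly less than $N=2^{n/3}$, so $v_N$ is a future vertex and $A'$ wins at round $T-1$. Combined with the preceding lemma's bound of $3\cdot 2^{-n/3}$ on the strengthened game's algorithm winning probability, this completes the reduction. The main obstacle throughout is exactly this scheduling: in the vanilla $w$-revealed game one would need an explicit lower bound on the cut-path length ($L > n^2 T$), which is awkward because cycle cutting can drastically shorten the path; passing to the $v$-revealed strengthened game rigidly fixes the walk endpoint's index as $N$, making it automatically future at every round $t\leq T-1$, and the one-query slack caused by reserving a round for the output is absorbed into the $2^{\Omega(n)}$ budget.
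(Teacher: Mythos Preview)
Your proposal takes essentially the same route as the paper: invoke Yao's principle on the cut-random-walk distribution, simulate any $\ep$ algorithm as a strategy in the $\hp$ game (answering queries from the revealed prefix and noting that any ``wrong'' answer already means a future-vertex hit), and conclude via the preceding lemma. The only difference is packaging: the paper simply asserts that the endpoint $w_L$ is a future vertex at every step and lets $A_H$ forward $A_E$'s queries directly, whereas you route through the strengthened $v$-revealed game to avoid bounding $L$ explicitly and reserve a round for the output---a valid but unnecessary detour, and the one-round slack you flag is indeed harmless for the stated bound.
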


\begin{proof}
In order to show that every \emph{probabilistic} algorithm fails to find the end of the path in $T=2^{n/3}/n^4$ queries with probability of at least $p=1-3\cdot 2^{-n/3}$, by the minmax theorem it is enough to show that there exists a distribution over paths such that every \emph{deterministic} algorithm fails to find the end of the path with probability of at least $p$.

Our random path will be $w_1,...,w_L$, as was described above. It is enough to show that every algorithm $A_E$ for $\ep(n)$ with success probability $p$ induces a strategy $A_H$ in the game $\hp(n)$ with success probability $p$. The algorithm $A_H$ will just follow the algorithm $A_E$. If up to time $t$ the algorithm $A_H$ has not hit a future vertex of the path, then at time $t$ the $A_H$ algorithm can calculate all the answers to the $\ep(n)$ queries (because in the $\hp$ settings it has at least as much information as in the $\hp$ settings), and using those answers it can produce the next query $q_{t+1}$. Therefore it is indeed possible to follow the $A_E$ algorithm. 

Finally, since the end of the path is a future vertex for all steps $t$, finding the end of the path will guarantee winning in the $\hp(n)$ game.

\end{proof}

\subsection{Proof of the main Theorems}

The proof of Theorem \ref{theo:main} is obtained by joining the three reductions in Sections \ref{sec:ne->fp-const}, \ref{sec:fp->ep}, and \ref{sec:ep->hp}.

\begin{proof}[Proof of Theorem \ref{theo:main}]
By Propositions \ref{pro:ep-hard}, \ref{pro:fp->ep}, and \ref{pro:ne->fp} we get the result.

For $p=3\cdot 2^{-n/3}$ we have
\begin{align*}
2^{n/3}/n^4 & \leq QC_p(\ep(n)) \\
& \leq QC_p(\fp(n, 79, 1/88)) \\ 
& \leq QC_p(\ne(n, 3608+1, \frac{3}{4}3608^{-2})) \\
& \leq QC_p(\ne(n, 3609, 2^{-24})).
\end{align*}
\end{proof}

The proof of Theorem \ref{theo:bin} is obtained by joining the three reductions in Sections \ref{sec:ne->fp-bin}, \ref{sec:fp->ep}, and \ref{sec:ep->hp}.

\begin{proof}[Proof of Theorem \ref{theo:bin}]
By Propositions \ref{pro:ep-hard}, \ref{pro:fp->ep}, and \ref{pro:ne->fp-bin} we get that for every $n$ and $p=3\cdot 2^{-n/3}$ holds:
\begin{align*}
2^{n/3}/n^4 & \leq QC_p(\ep(n)) \\
& \leq QC_p(\fp(n, 79, 1/88)) \\ 
& \leq QC_p(\ne(7390n, 2, 3695^{-2})).
\end{align*}

By replacing $n$ with $\frac{n}{7390}$ we have that for $p=3\cdot 2^{-n/22170}$ holds 
\begin{align*}
10^{16} \cdot \frac{2^{n/22170}}{n^4} \geq QC_p(\ne(n, 2, 3695^{-2})).
\end{align*}
\end{proof}

%\begin{proof}[Proof of Theorem \ref{theo:bin}]
%By Propositions \ref{pro:ep-hard}, \ref{pro:fp->ep}, and \ref{pro:ne->fp} we get the result.
%
%For $p=3\cdot 2^{-n/3}$ we have
%\begin{align*}
%2^{n/3}/n^4 & \leq QC_p(\ep(n)) \\
%& \leq QC_p(\fp(n, 79, 1/88)) \\ 
%& \leq QC_p(\ne(n, 3608+1, \frac{3}{4}3608^{-2})) \\
%& \leq QC_p(\ne(n, 3609, 2^{-24})).
%\end{align*}
%\end{proof}

\section{Open problems}

This paper presents one basic result on the complexity of an approximate Nash equilibrium in games with a large number of players $n$, where every player has a constant number of actions $m$ (or even just two actions). Even for these games, many questions still remain open.
\begin{enumerate}

\item For an approximate (not well supported) Nash equilibrium, this result yields an exponential lower bound of the query complexity, only for the case where the approximation is $\varepsilon=O(1/n)$ (Theorem \ref{theo:ane}). The case of an $\varepsilon$-Nash equilibrium for \emph{constant} $\varepsilon$ remains open. \emph{What is the query complexity of an $\varepsilon$-Nash equilibrium for constant $\varepsilon$?}

\item The result provides an exponential lower bound for constant but tiny approximation value, ($\varepsilon=\frac{1}{2}10^{-7}$). It will be interesting to improve the exponential lower bound for bigger values of $\varepsilon$. \emph{What is the query complexity of the problems $\ne(n,m,\varepsilon)$ for\footnote{For $\varepsilon=1/2$ it is known that the query complexity is polynomial; see \cite{DMP} and \cite{FGGS}.} $0<<\varepsilon<1/2$ and for constant $m$?}

\item As mentioned in the Introduction, from the computational complexity perspective this result provides evidence that for these games there is no algorithm for computing an approximate Nash equilibrium with running time $poly(n)$, or equivalently $poly(\log(N))$ where $N=nm^n$ is the input size. To the best of our knowledge, even $poly(N)$ algorithm is not known for this class of games. Lipton,  Markakis, and Mehta's \cite{LMM} sampling method provides an algorithm for computing an $\varepsilon$-Nash equilibrium with running time $poly(N^{\log N})$. Daskalakis and Papadimitriou \cite{DP} proved existence of $poly(N^{\log \log N})$ algorithm. Babichenko Barman and Peretz \cite{BBP} proves existence of $poly(N^{\log \log \log N})$ algorithm. \emph{Is there an algorithm that computes an $\varepsilon$-Nash equilibrium in $poly(N)$ steps?}

\item As mentioned in Section \ref{sec:dyn}, query complexity protocol is a special case of communication complexity protocol, i.e., the communication complexity lower bound induces a query complexity lower bound, but not vice versa. We want to emphasize that the \emph{communication complexity of an approximate (well-supported or not) Nash equilibrium} remains an open question.
\end{enumerate}

\section{Appendix A- Proof of Theorem \ref{theo:ane}}

\begin{proof}[Proof of Theorem \ref{theo:ane}]
We start with the following modification of Lemma 4.28 in \cite{DGP} on the construction of approximate WSNE from approximate NE.

Let $x=x_i$ be a $(\varepsilon^2/(16n))$-Nash equilibrium. For every player $i$ we classify the actions of player $i$ into three groups, according to the outcome of this action against $x_{-i}$ ($G_i$ \emph{good actions}, $M_i$ \emph{medium actions}, and $B_i$ \emph{bad actions}).
\begin{eqnarray*}
& G_i &=\{a_i\in A_i: br_i - \frac{\varepsilon}{4} \leq u_i(a_i,x_{-i})  \} \\
& M_i &=\{a_i\in A_i: br_i - \frac{\varepsilon}{2} < u_i(a_i,x_{-i})\leq br_i - \frac{\varepsilon}{4} \} \\
& B_i &=\{a_i\in A_i: u_i(a_i,x_{-i})< br_i - \frac{\varepsilon}{2} \}
\end{eqnarray*}
We fix some $g_i^*\in G_i$, and we let $y_i$ be any mixed action that moves \emph{all} the probability mass from $B_i$ to $g_i^*$, and in addition moves \emph{some} probability mass from $M_i$ to $g_i^*$ (in any possible way). Formally, $y_i(B_i)=0$, $y_i(m_i)\leq x_i(m_i)$ for every $m_i \in M_i$, $y_i(g_i)=x_i(g_i)$ for every $g_i\in G_i$ $g_i \neq g_i^*$, and finally $y_i(g_i^*)$ is defined so that the total measure of $y_i$ is 1.

Now we claim that every such profile $(y_i)_{i=1}^n$ is an $\varepsilon$-WSNE.

First, let us note that $x_i(M_i \cup B_i) \leq \varepsilon/(4n)$ because otherwise, only the losses from playing the actions in $M_i \cup B_i$ will be higher than $(\varepsilon^2/(16n))$. Therefore, the distance in total variation from $x_i$ to $y_i$ is at most $\varepsilon/(4n)$.

Second, let us note that if every opponent of player $i$ changes his mixed strategy with probability of at most $\varepsilon/(4n)$, then the payoff of player $i$ may change by at most $(n-1)\varepsilon/(4n)<\varepsilon/4$ (we recall that the payoffs are bounded in $[0,1]$).

Finally, every action that is played by $y_i$ with positive probability is an $\varepsilon/2$-best reply to $x_{-i}$. Therefore it is an $\varepsilon$-best reply to $y_{-i}$, because the best reply may increase by at most $\varepsilon/4$, whereas the performance of each action may decrease by at most $\varepsilon/4$.

Now, if we fix the threshold for classifying good and bad actions to be $br_i-3\varepsilon/8$, and we receive answers about $u_i(a_i,x_{-i})$ with precision $\varepsilon/8$ then indeed all the good actions will be classified to $G_i$, and all the bad actions will be classified to $B_i$. It will make no difference to us where the actions in $M_i$ are classified.

Now we are able to prove the reduction from an $\ane$ algorithm to a $\ne$ algorithm. We show that every $\ane(n,2,\varepsilon^2/(16n))$ algorithm that uses $T-64n^2/\varepsilon n^2$ queries and has success probability $p+2^{-n}$ induces an $\ne(n,2,\varepsilon)$ algorithm that uses $T$ samples and has success probability $p$. The $\ne$ algorithm follows the $\ane$ algorithm to find a $(\varepsilon^2/(16n))$-Nash equilibrium. Then it evaluates the numbers $u_i(a_i,x_{-i})$ using $32n/\varepsilon^2$ samples (the total number of samples will be $2n\frac{32n}{\varepsilon^2}$). By the Hoeffding inequality the probability that all the samples will approximate the values $u_i(a_i,x_{-i})$ with precision of $\varepsilon/8$ is at least $1-4ne^{-n}>1-2^{-n}$. Then the algorithm will use the above procedure with the threshold $br_i-3\cdot 10^{-8}/8$. If the $\ane$ algorithm indeed finds a $(\varepsilon^2/(16n))$-Nash equilibrium, and the samples indeed $\varepsilon/8$-approximate all the values $u_i(a_i,x_{-i})$ that accrue with probability of at least $p$, then the induced $\ne$ algorithm has succeeded in finding $\varepsilon$-WSNE. 

For a constant $\varepsilon$ we get from Theorem \ref{theo:bin} that every $\ane(n,2,\varepsilon^2/(16n))$ algorithm that uses $2^{\Omega(n)}-64n^2/\varepsilon^2=2^{\Omega(n)}$ queries has success probability of at most $2^{-\Omega(n)}+2^{-n}=2^{-\Omega(n)}$. In order to complete the proof, we observe that
\begin{align*}
2^{\Omega(n)} = 2^{\Omega(\frac{\varepsilon^2}{16} n)} \leq QC_p\left( \ane \left( \frac{\varepsilon^2}{16} n,2,\frac{1}{n} \right) \right) \leq QC_p(\ane( n,2,\frac{1}{n})).
\end{align*}
\end{proof}

\section{Appendix B- distribution-queries dynamics}

The class of $k$-distribution-queries dynamics is defined similarly to $k$-queries dynamics.

\begin{definition}
A dynamic will be called $k$\emph{-distribution-queries} dynamic, if there exists a mapping that assigns to each history of play a set of $k$ (additional) distribution payoff queries, such that the mixed strategy of all players at time $t$ can be calculated using the $tk$ queries until time $t$.
\end{definition}

By the definition of $k$\emph{-distribution-queries} dynamics, for every such dynamic the mixed strategy of player $i$ at time $t$ is $x_i(t)=f_i(u(d_1),...,u(d_{tk}))$ where $d_1,...,d_{tk}\in \Delta(A)$ are the distribution-queries that was asked until time $t$.

We use the total variation distance on $\Delta(A_i)$, i.e., for $x_i,y_i\in \Delta(A_i)$
\begin{align*}
d_1(x_i,y_i)=\sum_{j=1}^m |x_i(a_j)-y_i(a_j)|.
\end{align*}

\begin{definition}\label{def:con}
We will say that a $k$-distribution-queries dynamic is\newline $\nu$\emph{-Lipschitz continuous} if it is $\nu$-Lipschitz continuous with respect to the answers of the queries, i.e., 
$d_1(f_i(v_1,...,v_{tk}),f_i(w_1,...,w_{tk}))\leq \nu \alpha$ for every player $i$, every $t\in \mathbb{N}$ and every pair of $n$-dimensional vector sequences $(v_1,...,v_{tk}),(w_1,...,w_{tk})$ that satisfy $||v_l-w_l||_\infty \leq \alpha$ for every $l\in [tk]$.
\end{definition}

$\nu$ Lipschitz continuity of the dynamic, simply means that the mixed strategies of the players will not change by a lot if the answers to the payoff queries not changes by a lot.

Now we are able to state the version of Corollary \ref{cor:dyn-pure} for the case of $k$-distribution-queries dynamics.

\begin{theorem}\label{theo:dyn}
There exists constant $\varepsilon$, $k=2^{\Omega(n)}$ and $T=2^{\Omega(n)}$ such that there exists no $\nu$-Lipschitz continuous $k$-distribution-queries dynamic that converges to an $\varepsilon$-well-supported Nash equilibrium in $T$ steps with probability more than $\nu 2^{-\Omega(n)}$ in all $n$-player binary action games.
\end{theorem}

Note that the exponential lower bound holds even if the Lipschitz constant $\nu$ is exponentially large.

\begin{proof}[Proof of Theorem \ref{theo:dyn}]

Fix constant $\varepsilon$ and fix $T=2^{\Omega(n)}$, $k=2^{\Omega(n)}$, $p=2^{-\Omega(n)}$ $\delta=2^{-\Omega(n)}$ such that 
\begin{enumerate}
\item Every distribution query algorithm that uses $Tk$ distribution queries, and receives answers with precision $\delta$, finds an $\varepsilon$-WSNE in $n$-player binary-action games with probability of at most $p$. By Theorem \ref{theo:NEdist} such values $\varepsilon,T,k,p,$ and $\delta$ exist.

\item $\delta T=2^{-\Omega(n)}$. We can always guarantee this condition by reducing the constant at the exponent of $T$. Note that if condition (1) is satisfied then it is satisfied also after reducing the constant at the exponent of $T$.
\end{enumerate}

Given a $\nu$-Lipschitz dynamic $D$ that finds $\varepsilon$-WSNE in $T$ steps with probability $p+\nu\delta n T=p+\nu 2^{-\Omega(n)}$, we define a new class of dynamics $\mathcal{D}^\delta$ that receives the answers to the queries with noise $\delta$ (every different noise on the queries defines a different dynamic). The dynamic $D$ defines a random variable $h(T)$ over histories of play of size $T$. Similarly, every dynamic $D_\delta \in \mathcal{D}^\delta$ defines the random variable $h_\delta(T)$. By $\nu$-Lipschitz continuity, for every history $h$ the total variation distance between the mixed actions of each player (in $D$ and $D_\delta$) is at most $\nu\delta$. Therefore, the total variation of the mixed action profile is $\nu\delta n$ (we recall that total variation is a metric and we can use triangle inequality). Therefore, the total variation distance between the random variables $h(T)$ and $h_\delta(T)$ is at most $\nu\delta n T$. Therefore, if using $h(T)$ we can find an $\varepsilon$-WSNE with probability of at least $p+\nu\delta n T$, then using $h_\delta(T)$ we can find an $\varepsilon$-WSNE with probability of at least $p$.

Every class of $T$-steps $k$-distribution-queries dynamics $\mathcal{D}_\delta$ induces an algorithm that uses $Tk$ distribution queries with answer precision $\delta$. By condition (1) above, every algorithm solves the $\ned(n,2,\varepsilon,\delta)$ problem with probability of at most $p=2^{-\Omega(n)}$ after $Tk$ queries. Therefore, there is no dynamic $D$ that converges to $\varepsilon$-WSNE in $T$ steps with probability more than $p+\nu 2^{-\Omega(n)}=\nu 2^{-\Omega(n)}$.
\end{proof}


\begin{thebibliography}{99}

\bibitem{BB} Babichenko, Y. and Barman, S. (2013) ``Query Complexity of Correlated Equilibrium," arXiv:1306.2437.

\bibitem{BBP} Babichenko, Y., Barman, and Peretz, R. (2014) ``Simple approximate equilibria in large games," in \emph{Proceedings of the 15th ACM conference on Economics and Computation}, pp. 753--770.

\bibitem{B} Blume, L. (1993) ``The Statistical Mechanics of Strategic Interaction,"
\emph{Games and Economic Behavior} 5, 387--424.

\bibitem{CD} Chen, X. and Deng, X., (2006) ``Settling the Complexity of Two-Player Nash Equilibrium," in \emph{47th Annual IEEE Symposium on Foundations of Computer Science,} pp. 261--272.

\bibitem{CS} Conitzer, V. and Sandholm, T. (2004) ``Communication Complexity as a
Lower Bound for Learning in Games," \emph{The Twenty-First International Conference on Machine Learning,} 185--192.

\bibitem{DGP} Daskalakis, C., Goldberg, P. W., and Papadimitriou, C. H., (2009) ``The Complexity of Computing a Nash Equilibrium," \emph{SIAM Journal of Computing} 39, 195–-259.

\bibitem{DMP} Daskalakis, C., Mehta, A., and Papadimitriou, C. H. (2009) ``A Note on Approximate Nash Equilibria," \emph{Theoretical Computer Science} 410, 1581--1588.

\bibitem{DP} Daskalakis, C. and Papadimitriou, C. H. (2008) ``Discretized Multinomial Distributions and Nash equilibria in Anonymous Games,'' \emph{Proceedings of the 49th Annual IEEE Symposium on Foundations of Computer Science}, pp. 25--34.

\bibitem{DGM} Diaconis, P., Graham, R. L., and Morrison J. A. (1990) ``Asymptotic Analysis of a Random Walk on a Hypercube with Many Dimensions," \emph{Random Structures and Algorithms,} 1, 51–-72.


\bibitem{FGGS} Fearnley, J., Gairing, M., Goldberg, P. W., and Savani, R. (2013) ``Learning Equilibria of Games Via Payoff Queries," in \emph{Proceedings of the 14th ACM conference on Electronic Commerce,} pp. 397--414.

\bibitem{GR} Goldberg, P.W., and Roth, A. (2014) ``Bounds for the query complexity of approximate equilibria." in \emph{Proceedings of the 15th ACM conference on Economics and Computation,} pp. 639--656.

%\bibitem{FV} Foster, D. and Vohra, R. (1998) ``Asymptotic Calibration," \emph{Biometrika,} 85, 379–-390.

\bibitem{HM} Hart, S. and Mansour, Y. (2010) ``How Long to Equilibrium? The
Communication Complexity of Uncoupled Equilibrium Procedures," \emph{Games
and Economic Behavior} 69, 107--126.

\bibitem{HMc1} Hart, S. and Mas-Colell, A. (2000) ``A Simple Adaptive Procedure Leading to Correlated Equilibrium," \emph{Econometrica,} 68, 1127–-1150.

\bibitem{HMc2} Hart, S. and Mas-Colell, A. (2003) ``Uncoupled Dynamics do Not
Lead to Nash Equilibrium," \emph{American Economic Review} 93, 1830--1836.

\bibitem{HMc4} Hart, S. and Mas-Colell, A. (2005) ``Adaptive Heuristics," \emph{Econometrica} 73, 1401--1430.

\bibitem{HMc3} Hart, S. and Mas-Colell, A. (2006) ``Stochastic Uncoupled Dynamics
and Nash Equilibrium," \emph{Games and Economic Behavior} 57, 286--303.

\bibitem{HN} Hart, S. and Nisan, N. (2013) ``The Query Complexity of Correlated Equilibria," arXiv:1305.4874.

\bibitem{HPV} Hirsch, M. D., Papadimitriou, C. H., and Vavasis, S. A. (1989) ``Exponential Lower Bounds for Finding Brower Fixed Points," \emph{Journal of Complexity} 5, 379--416.

\bibitem{H} Hoeffding, W. (1963) ``Probability Inequalities for Sums of Bounded Random Variables," \emph{Journal of the American Statistical Association} 58, 13--30.

\bibitem{JL} Jiang, A. X. and Layton-Brown, K. (2013) ``Polynomial-time Computation of Exact Correlated Equilibrium in Compact Games," \emph{Games and Economic Behavior}, forthcoming.

\bibitem{KN} Kushilevitz, E. and Nisan, N., 1997. \textit{Communication Complexity.} Cambridge Univ. Press.

\bibitem{LMM} Lipton, R. J., Markakis, E., and Mehta, A. (2003) ``Playing Large Games Using Simple Strategies,'' in \emph{Proceedings of the 4th ACM Conference on Electronic Commerce,} pp. 36--41.

\bibitem{LW} Littlestone, N. and Warmuth, M. K. (1989) ``The Weighted Majority Algorithm," in \emph{30th Annual Symposium on Foundations of Computer Science,} pp. 256--261.

\bibitem{PR} Papadimitriou, C. H. and Roughgarden, T. (2008) ``Computing Correlated Equilibria in Multi-Player Games," \emph{Journal of the ACM} 55, Article No. 14.

\bibitem{R} Robinson, J. (1951) ``An Iterative Method of Solving a Game,", \emph{Annals of Mathematics} 54, 296--301.

\bibitem{R1} Rubinstein, A. (2014) ``Computational Complexity of Approximate Nash Equilibrium in Large Games," arXiv:1405.0524.

\bibitem{R2} Rubinstein, A. (2014) ``Inapproximability of Nash Equilibrium," arXiv:1405.3322.

\bibitem{Sa} Sandholm, W. H. (2010) \emph{Population Games and Evolutionary Dynamics,} MIT Press.

\bibitem{S} Shmaya, E. (January 5, 2012) ``Brouwer Implies Nash Implies Brouwer," The Leisure of the Theory Class, http://theoryclass.wordpress.com/ 2012/01/05/brouwer-implies-nash-implies-brouwer/.

\end{thebibliography}
\end{document}